\documentclass[11pt,fleqn]{article}

\usepackage{amssymb,amsmath}
\usepackage{natbib}
\usepackage{vmargin,setspace}
\usepackage{graphicx,graphics,color,tikz}


\usepackage{libertine}
\usepackage[libertine]{newtxmath}

\definecolor{darkgreen}{rgb}{0.2,0.5,0.2}
\definecolor{darkred}{rgb}{0.6,0.1,0.1}

\usepackage[colorlinks=true, pdfstartview=FitV, linkcolor=darkgreen, citecolor=darkred, urlcolor=black]{hyperref}

\newcounter{llst}
\newenvironment{abet}{\begin{list}{\rm (\alph{llst})}{\usecounter{llst}
\setlength{\itemindent}{0em} \setlength{\leftmargin}{3em}
\setlength{\labelwidth}{2em} \setlength{\labelsep}{1em}}}{\end{list}}
\newenvironment{numm}{\begin{list}{\rm (\roman{llst})}{\usecounter{llst}
\setlength{\itemindent}{0em} \setlength{\leftmargin}{3.5em}
\setlength{\labelwidth}{2.5em} \setlength{\labelsep}{1em}}}{\end{list}}

\newtheorem{theorem}{Theorem}[section]

\newtheorem{axiom}[theorem]{Axiom}

\newtheorem{corollary}[theorem]{Corollary}

\newtheorem{definition}[theorem]{Definition}
\newtheorem{expl}[theorem]{Example}

\newtheorem{lemma}[theorem]{Lemma}

\newtheorem{dscrpt}[theorem]{Description}

\newenvironment{proof}[1][Proof]{\noindent \textbf{#1.} }{\hfill
\rule{0.5em}{0.5em}}

\setpapersize{A4}
\setmarginsrb{80pt}{40pt}{80pt}{60pt}{12pt}{10pt}{12pt}{30pt}

\begin{document}


\title{\textbf{The Core of an Economy with \\ an Endogenous Social Division of Labour}\thanks{I thank Marialaura Pesce, Dimitrios Diamantaras and Owen Sims for debates that guided me in developing this paper.}}

\author{Robert P.~Gilles\thanks{Economics Group, Management School, The Queen's University of Belfast, Riddel Hall, 185 Stranmillis Road, Belfast, BT9~5EE, UK. \textsf{Email: r.gilles@qub.ac.uk}} }

\date{\today}

\maketitle

\begin{abstract}
\singlespace
\noindent
This paper considers the core of a competitive market economy with an endogenous social division of labour. The theory is founded on the notion of a ``consumer-producer'', who consumes as well as produces commodities. First, we show that the Core of such an economy with an endogenous social division of labour can be founded on deviations of coalitions of arbitrary size, extending the seminal insights of Vind and Schmeidler for pure exchange economies. Furthermore, we establish the equivalence between the Core and the set of competitive equilibria for continuum economies with an endogenous social division of labour.

Our analysis also concludes that self-organisation in a social division of labour can be incorporated into the Edgeworthian barter process directly. This is formulated as a Core equivalence result stated for a Structured Core concept based on renegotiations among fully specialised economic agents, i.e., coalitions that use only fully developed internal divisions of labour.

Our approach bridges the gap between standard economies with social production and coalition production economies. Therefore, a more straightforward and natural interpretation of coalitional improvement and the Core can be developed than for coalition production economies.
\end{abstract}

\begin{description}
\singlespace
\item[Keywords:] Consumer-producer; Social division of labour; Competitive equilibrium; Core of an economy; Core equivalence.

\item[JEL classification:] D41, D51
\end{description}

\thispagestyle{empty}

\newpage

\setcounter{page}{1} \pagenumbering{arabic}

\section{Specialisation and the social division of labour}

One of the oldest ideas to explain economic wealth creation is that through a functional social division of labour \citep{Plato2007,Smith1776,Mandeville1714,Babbage1835}. Wealth creation in an economy with a social division of labour is founded on the interplay of two fundamental principles, namely that there are \emph{increasing returns to specialisation} and that full exploitation of these returns is possible through the principle of \emph{gains from trade}. These principles have already been explored in these early sources.

The principle that human capital is more productive if fully specialised in the execution of a limited set of activities or the production of a limited set of closely related goods has been proposed throughout the history of economic thought, but has not yet been explored fully in contemporary economic theory. This paper proposes formal concepts that describe these effects of Increasing Returns to Specialisation (IRSpec) that are weaker than the usually employed notion of Increasing Returns to Scale (IRS). This is formalised by the conception that the convex hull of the full specialisation production plans capture the complete production set.

We show in this paper that the property of Increasing Returns to Specialisation supports the two fundamental functions of a social division of labour, namely the generation of economic wealth and the allocation of that generated wealth. Indeed, specialised economic agents interact through a competitive price mechanism that acts as a coordination device in the process of economic wealth creation \emph{as well as} a mechanism to allocate the generated wealth through the social division of labour. To pursue this, we apply the framework developed in \citet{Yang1988} and \citet{YangNg1993}.\footnote{Alternative explorations to model the idea of wealth creation through a social division of labour have been presented in \citet{Young1928,Stigler1951,Blitch1983,Kim1989,Locay1990} and \citet{Nakahashi2014}. For an overview we also refer to \citet{Steinegger2010} and \citet{Sun2012}.} This approach represents an economic decision maker as a ``consumer-producer'', who is endowed with consumptive as well as productive abilities.\footnote{Technically this introduces a setting that is equivalent to the notion of a ``coalition production economy'' \citep{Hildenbrand1970,Hildenbrand1974} which is also explored in \citet{Suzuki1995}, \citet{Simone1997} and \citet{Toda2002}. Suzuki, De Simone and Toda focus mainly on the technical issue of the existence of an equilibrium rather than the role of specialisation in the endogenous formation of a social division of labour.} Yang presented this approach as a remedy to the formal (social) dichotomy of consumption and production that is at the centre of the Walrasian theory of a market economy \citep{ArrowDebreu1954,McKenzie1954,McKenzie1959}.

However, the concept of Increasing Returns to Specialisation remains relatively unexplored in formal models based on the notion of a consumer-producer. In particular, \citet{SunYangZhou2004} formalised a general equilibrium framework founded on the models of \citet{Yang1988,Yang2001,Yang2003} and \citet{YangNg1993}. In this framework, production sets of individual consumer-producers are typically bounded as well as non-convex. These authors restrict their model to one based on home-based production only and investigate the existence of competitive equilibrium under well-specified transaction costs. This formal model omits the explicit introduction of either Increasing Returns to Scale or Increasing Returns to Specialisation and, as a result, does not investigate the consequences of these fundamental hypotheses. Also, the focus on home-based production excludes the trade of intermediary inputs as part of social production processes, an essential feature of general equilibrium models of production economies. In the present paper we try to address these omissions.

We introduce the Core of an economy with an endogenous social division of labour in a rather straightforward fashion. As in coalition production economies \citep{Hildenbrand1968,Hildenbrand1974}, coalitions are assumed to be able to create wealth through the appropriate fine-tuning of the assigned production plans to its constituting members. The generated outputs are then allocated for consumption among the members of the coalition. This defines a standard Core concept in the setting of an economy with endogenous social division of labour.

We are able to show that non-Core allocations can be improved upon or ``blocked'' by coalitions of arbitrary size. This extends the well-known results of \citet{Schmeidler1972} and \citet{Vind1972} on blocking in pure exchange economies. Furthermore, we show that in a continuum economy there is an equivalence between the Core and the set of competitive equilibria, fully analogous to \citet{Aumann1964}. Our Core-equivalence result improves established insights from the literature on coalition production economies, in particular the results for pseudo- and quasi-equilibria shown in \citet{Hildenbrand1968} and \citet{Oddou1982}.


In this paper we explore the introduction of a formal description of Increasing Returns to Specialisation (IRSpec) at the level of the individual consumer-producer. Formally, a production set satisfies IRSpec if it is contained in the convex hull of production plans in which there is full specialisation in the production of a single good---subject to the free-disposal hypothesis. This property has been hinted at by \citet{Yang2001}, \citet{Yao2002} and \citet{DiamantarasGilles2004}, but has not been developed fully in that literature. We address this omission in the present paper as well as the companion paper on general competitive equilibrium in an economy with an endogenous social division of labour, \citet{Gilles2016}.

Finally, under IRSpec, any Core allocation can be improved upon by a coalition of fully specialised individuals. In particular, this coalition consists of non-negligible subcoalitions of consumer-producers that are specialised in the production each of the marketable commodities. This insight is related to the result of \citet{Grodal1972} for coalition production economies.

\subsection*{Outline of the paper}

In the second section of the paper we set up the model of a consumer-producer, in particular a generalised model of production already developed in \citet{Gilles2016}. This allows the introduction of the production of goods through an adaptable social division of labour based on the principles of increasing returns to specialisation and gains from trade.

Section 3 introduces the notion of an economy with an endogenous social division of labour and defines the notion of a competitive equilibrium. Subsequently, we introduce the notion of coalitional improvements and the concept of a Core allocation. We show that in this setting non-Core allocations can be blocked by coalitions of arbitrary, predetermined size, including arbitrarily large and arbitrarily small coalitions, extending the insights of \citet{Vind1972} and \citet{Schmeidler1972} to economies with production. We also extend the standard Core equivalence result that in continuum economies all Core allocations are competitive equilibrium allocations.

Finally, in Section 3 we explore an extension of the Core, capturing that only coalitions in which members are fully specialised are able to improve upon a given allocation. We show that these $\delta$-Core allocations are fully equivalent to the Core allocations as well as the competitive equilibrium allocations in a continuum economy in which all agents have access to the same productive specialisations.

All proofs of the main results are collected in several appendices.

\section{Economies with a social division of labour}

Throughout, we consider that there are $\ell \geqslant 2$ marketable commodities. The \emph{commodity space} is therefore represented as the $\ell$-dimensional Euclidean space $\mathbb{R}^{\ell}$ and the \emph{consumption space} is its nonnegative orthant $\mathbb{R}^{\ell}_+$. The commodity space represents all bundles of tradable goods in this economy. In particular, for $k = 1, \ldots ,\ell$ we denote by $e_k = (0, \ldots ,0, 1 ,0, \ldots ,0)$ the $k$-th unit bundle in $\mathbb{R}^{\ell}_+$ and by $e = (1, \ldots ,1)$ the bundle consisting of one unit of each good.\footnote{Throughout, we employ the vector inequality notation that $x \geqslant x'$ if $x_k \geqslant x'_k$ for all commodities $k = 1, \ldots ,\ell$; $x > x'$ if $x \geqslant x'$ and $x \neq x'$; and $x \gg x'$ if $x_k > x'_k$ for all commodities $k = 1, \ldots ,\ell$.}

In this paper we consider an economy with a diversified production sector, which is founded on the hypothesis that all agents are participating directly in the production as well as the consumption of goods. Hence, the production in this economy is founded on an endogenous social division of labour that emanates from decisions of all individual economic agents. Consequently, economic agents are modelled as \emph{consumer-producers}---a notion formally conceived by \citet{Yang1988}. 

\subsection{Introducing consumer-producers}

A consumer-producer is formally introduced as a pair $(u, \mathcal P )$, where $u \colon \mathbb{R}^{\ell}_+ \to \mathbb{R}$ is a utility function representing the agent's consumptive preferences and $\mathcal{P} \subset \mathbb{R}^{\ell}$ is the agent's production set consisting of production plans. Normally we can write every production plan $y \in \mathcal P$ as $y = y^+ - y^-$, where $y^+ = y \vee 0$ is the corresponding vector of output quantities and $y^- = (-y) \vee 0$ is the corresponding vector of input quantities.

We also assume that there can be non-marketable inputs in the production process described by $\mathcal P$---such as the agent's labour time and land resources---that are not explicitly modelled. We facilitate the possibility that all outputs are generated using non-marketable inputs only. This is referred to as ``home'' production. The approach to the modelling of production introduced here is similar to the one set out for coalition production economies in \citet[Section 2.1]{Sondermann1974}.\footnote{In a coalition production economy, productive abilities are assigned to coalitions of economic agents. For further discussion of coalition production economies I refer to \citet{Hildenbrand1970,Sondermann1974,Oddou1982} and \citet{Basile1993}.}

In this paper, the concept of a consumer-producer is introduced in a more general fashion than considered in \citet{Yang2001}, \citet{Yao2002}, \citet{SunYangZhou2004} and \citet{DiamantarasGilles2004}. In those papers, a consumer-producer is endowed with one unit of non-marketable labour time which can be allocated to either consumption as leisure or as production of any of the considered marketable commodities. Here, instead, we dispense of the explicit introduction of labour time and simplify the framework by considering production sets similar to the ones used in standard general equilibrium theory with inputs and outputs.

With regard to our approach to production, in the production set $\mathcal{P}_a$ of consumer-producer $a \in A$, we distinguish two types of inputs: \emph{Marketable inputs}, which are part of the $\ell$ commodities introduced above, and \emph{non-marketable inputs}, which are assumed to be outside the realm of the commodity space. 

\paragraph{Regularity assumptions.}

The next assumption brings together the regularity properties that one expects to be satisfied by the basic descriptors of the consumptive and productive abilities of a consumer-producer. 
\begin{axiom} \label{ax:basic}
Throughout this paper, we assume that every agent is a consumer-producer represented by $(u, \mathcal{P})$, satisfying the following properties:
\begin{numm}
\item The utility function $u \colon \mathbb{R}^{\ell}_+ \to \mathbb{R}$ is \emph{continuous} on the consumption space $\mathbb{R}^{\ell}_+$ and \emph{monotone} in the sense that for all consumption bundles $x,x' \in\mathbb{R}^{\ell}_+$ with $x \gg x'$ it holds that $u(x) > u(x')$.

\item The production set $\mathcal{P} \subset \mathbb{R}^{\ell}$ is \emph{regular} in the sense that $\mathcal{P}$ is a closed set that is bounded from above, $0 \in \mathcal{P}$ and it is comprehensive in the sense that
\begin{equation}
\mathcal{P} - \mathbb{R}^{\ell}_+ \equiv \left\{ \, y-z \, \left| \, y \in \mathcal{P} \mbox{ and } z \geqslant 0 \, \right. \right\} \subset \mathcal{P} .
\end{equation}
\item The production set $\mathcal{P} \subset \mathbb{R}^{\ell}$ is \emph{non-trivial} in the sense that
\begin{equation}
	\mathrm{Conv} \, \left( \mathcal{P} \right) \cap \mathbb R^{\ell}_{+} \setminus \{ 0 \} \neq \varnothing ,
\end{equation}
i.e., there exists some $z > 0$ with $z \in \mathrm{Conv} \, \mathcal P$.\footnote{Here, we denote by $\mathrm{Conv} \, (X) = \{ \lambda x + (1- \lambda )y \mid x,y \in X$ and $0 \leqslant \lambda \leqslant 1 \}$ the convex hull of the set $X$.}
\end{numm}
\end{axiom}
Axiom \ref{ax:basic}(i) imposes that we only consider ``regular'' preferences in this economy. The imposed properties of continuity and (weak) monotonicity are standard assumptions in the general equilibrium literature. For certain results, we impose the additional assumption that preferences are \emph{strictly monotone} in the sense that $u (x) > u (x')$ for all $x>x'$.

Furthermore, Axiom \ref{ax:basic}(ii) imposes that we only consider production sets that satisfy certain standard regularity properties. In particular, a regular production set satisfies the properties that one has the ability to cease production altogether and the assumption of free disposal in production. Both properties are used throughout general equilibrium theory.

Also, it is natural to assume that due to their size individual consumer-producers can only manage bounded production processes and are not able to grow their operations arbitrarily. This is expressed through the property that there is an upper bound on the individual's production set.

Note that regularity does not include convexity, allowing production to exhibit non-convexities, in particular increasing returns to scale and specialisation, introducing constructions of production sets developed in the existing literature on economies with an endogenous social division of labour \citep{Yang1988,Yang2001,SunYangZhou2004,Gilles2016}.

Finally, Axiom \ref{ax:basic}(iii) requires that production is non-trivial and that economic agents can achieve a meaningful productive output. The stated property implies that economic agents can produce certain outputs without using too much inputs. This implies that all economic agents are assumed to have meaningful productive abilities and can contribute to the social economy productively.

\subsection{Defining an economy based on a social division of labour}

The set of agents is denoted by $A$ and a typical economic agent is $a \in A$. In this paper, $A$ is either a finite set or a continuum endowed with a complete, atomless probability measure to determine the size of coalitions in the population.\footnote{There also is the possibility to consider mixed cases as well in which there is a continuum of negligible agents and countably many non-negligible agents, represented by atoms \citep{Shitovitz1973,Shitovitz1982}. We do not address these cases in the theory developed in this paper.} Formally, we let $A$ be an arbitrary set of economic consumer-producers represented by $(u_a,\mathcal{P}_a)$. Let $\Sigma \subset 2^A$ be a well-chosen $\sigma$-algebra of measurable coalitions in $A$ and let $\mu \colon \Sigma \to [0,1]$ be a complete probability measure on $(A, \Sigma )$.

We distinguish two standard cases: (1) $A = \{ 1, \ldots ,n \}$ is a countable set, $\Sigma =2^A$ is the class of all subsets of $A$\footnote{On a non-finite, countable agent set $A$, $\Sigma$ can also be interpreted here as the $\sigma$-algebra generated by the finite subsets of $A$.} and $\mu$ is the normalised counting measure on $\Sigma$. (2) $A$ is an uncountably infinite set, $\Sigma$ is a $\sigma$-algebra of measurable subsets of $A$ and $\mu$ is an atomless probability measure on $\Sigma$. In the remainder of this paper case (1) is denoted as a ``finite'' economy. The second case (2) is denoted as a ``continuum'' economy.

We now develop the model of an economy in which all agents are represented as consumer-producers and simultaneously make decisions regarding consumption and production under a given vector of competitive market prices. It is implicitly assumed that all consumer-producers are powerless to influence the price level through their consumption and/or production decisions. Hence, the model is a direct extension of the standard model of an economy with (social) production.
\begin{definition} \label{def:economy}
An \textbf{economy} with $\ell$ commodities is a triple $\mathbb{E} = \langle \, (A, \Sigma ,\mu ), u, \mathcal{P} \rangle$ where
\begin{numm}
\item $(A, \Sigma ,\mu )$ is a complete probability space of consumer-producers, represented by $(u_a , \mathcal P_a )$ for every $a \in A$;

\item The utility function $u ( \cdot ,x) \colon A \to \mathbb{R}$ is measurable on the probability space $(A, \Sigma ,\mu )$;

\item The correspondence of production sets $\mathcal{P} \colon A \to 2^{\mathbb{R}^{\ell}_+}$ has a measurable graph on the probability space $(A, \Sigma ,\mu )$ such that the correspondence $\mathcal P$ is integrably bounded from above\footnote{Hence, we impose that there exists some integrable function $\overline g \colon A \to \mathbb R^{\ell}$ such that for almost every agent $a \in A$ and every production plan $y \in \mathcal P_A \colon y \leqslant \overline g (a)$.} and;

\item For every agent $a \in A$ the pair $(u_a , \mathcal{P}_a )$ satisfies Axiom \ref{ax:basic}.
\end{numm}
An economy $\, \mathbb{E} = \langle \, (A, \Sigma ,\mu ), u, \mathcal{P} \rangle$ is a \textbf{continuum economy} if the complete probability space of consumer-producers $(A, \Sigma ,\mu )$ is atomless.
\end{definition}
An \emph{allocation} in an economy $\mathbb{E} = \langle \, (A, \Sigma ,\mu ), u, \mathcal{P} \rangle$ is a pair $(f,g)$ where $f \colon A \to \mathbb{R}^{\ell}_+$ is an integrable assignment of final consumption bundles and $g$ is an integrable selection of the production correspondence $\mathcal{P}$. Thus, an allocation $(f,g)$ consists of a consumption plan $f$ and a social production plan $g$. 

An allocation $(f,g)$ is now \emph{feasible} if all allocated consumption bundles are covered by the produced quantities of all commodities, i.e.,
\begin{equation}
\int f \, d\mu \leqslant \int g \, d\mu .
\end{equation}

\section{Competitive equilibria and the Core}

\paragraph{The consumer-producer problem.}

An individual consumer-producer $a \in A$ operating in a system of perfectly competitive markets faces a vector $p \in \mathbb{R}^{\ell}_+ \setminus \{ 0 \}$ of prices for the $\ell$ marketable commodities. Consequently, she optimises her utility over the set of the feasible consumption-production plans with the goal to maximise consumptive satisfaction. This optimisation problem is denoted as the ``consumer-producer problem'':

\medskip\noindent
Every consumer-producer $a \in A$ aims to select a consumption plan $\hat{x} \in \mathbb{R}^{\ell}_+$ and production plan $\hat{y} \in \mathbb{R}^{\ell}$ such that $(\hat{x} , \hat{y})$ solves
\begin{equation} \label{CP-Prob}
\max \, \left\{ u_a (x) \, \left| \, x \in \mathbb{R}^{\ell}_+ \, , \ y \in \mathcal{P}_a \, \mbox{ and } \, p \cdot x \leqslant p \cdot y \, \right. \right\} .
\end{equation}
In this statement, $\hat{x} \in \mathbb{R}^{\ell}_+$ denotes the \textbf{final consumption bundle}, while $\hat{y} \in \mathbb{R}^{\ell}$ is referred to as the \textbf{production plan} executed by agent $a$. Now, $t = \hat{y}- \hat{x} \in \mathbb{R}^{\ell}$ is the vector of net trades submitted to the market.

\medskip\noindent
We first establish that the consumer-producer problem stated here has actually a solution under the regularity conditions imposed.
\begin{lemma} \label{thm:CP-exist}
\textbf{\emph{(Solution to the Consumer-Producer Problem)}} \\
Let $a \in A$ and let $(u_a, \mathcal P_a)$ satisfy regularity Axiom \ref{ax:basic}. Then the consumer-producer problem for $a \in A$ stated in (\ref{CP-Prob}) has a solution for any strictly positive price vector $p \gg 0$.
\end{lemma}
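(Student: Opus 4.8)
The plan is to show that the feasible set of the consumer-producer problem (\ref{CP-Prob}), viewed as a subset of $\mathbb{R}^{\ell}_+ \times \mathbb{R}^{\ell}$, is nonempty and \emph{compact}; then continuity of $u_a$ and the extreme value theorem immediately yield a maximiser $(\hat x, \hat y)$. Nonemptiness is trivial: since $0 \in \mathcal{P}_a$ by Axiom~\ref{ax:basic}(ii), the pair $(x,y) = (0,0)$ satisfies all the constraints. The entire substance of the argument lies in the compactness claim, and this is where strict positivity $p \gg 0$ is used in an essential way.

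For compactness, first extract from Axiom~\ref{ax:basic}(ii) an upper bound $\bar y \in \mathbb{R}^{\ell}$ with $y \leqslant \bar y$ for every $y \in \mathcal{P}_a$; since $0 \in \mathcal{P}_a$ we have $\bar y \geqslant 0$, and because $p \geqslant 0$ this gives $p \cdot y \leqslant p \cdot \bar y =: M$ for all $y \in \mathcal{P}_a$, with $M \geqslant 0$. Next, for \emph{any} feasible pair $(x,y)$ one automatically has $0 \leqslant p \cdot x \leqslant p \cdot y \leqslant M$, using $x \geqslant 0$ and $p \gg 0$ for the leftmost inequality. Since $p \gg 0$, the bound $p \cdot x \leqslant M$ confines $x$ to the compact box $\prod_{k=1}^{\ell} [\, 0, M/p_k \,]$. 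For the production component, combine $p \cdot y \geqslant 0$ with $y \leqslant \bar y$: for each coordinate $j$, $p_j y_j \geqslant - \sum_{k \neq j} p_k y_k \geqslant - \sum_{k \neq j} p_k \bar y_k$, so, dividing by $p_j > 0$, each $y_j$ is bounded below as well as above. Hence the slice $\{\, y \in \mathcal{P}_a \mid p \cdot y \geqslant 0 \,\}$ is bounded, and it is closed as the intersection of the closed set $\mathcal{P}_a$ with a closed half-space, so it is compact.

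Assembling these observations: the feasible set of (\ref{CP-Prob}) is cut out by the closed conditions $x \geqslant 0$, $y \in \mathcal{P}_a$ and $p\cdot x - p\cdot y \leqslant 0$, so it is closed; it is contained in the compact product of the box $\prod_{k=1}^{\ell}[\,0,M/p_k\,]$ with the compact slice $\{\, y \in \mathcal{P}_a \mid p\cdot y \geqslant 0 \,\}$; hence it is compact, and it is nonempty by the first paragraph. The objective $(x,y) \mapsto u_a(x)$ is continuous by Axiom~\ref{ax:basic}(i), so it attains its maximum on this set, and that maximiser is the required solution.

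The one genuine obstacle is the boundedness of the production component: $\mathcal{P}_a$ is only assumed bounded \emph{from above}, not compact, and if some price were zero the slice $\{\, y \in \mathcal{P}_a \mid p\cdot y \geqslant 0 \,\}$ could run off to $-\infty$ in a direction on which $p$ vanishes (free disposal permits exactly such rays), destroying compactness and, with it, existence. The coordinatewise lower bound above is precisely the point where $p \gg 0$ is indispensable; note that no use is made of comprehensiveness, of non-triviality, or of monotonicity of $u_a$ — only of continuity of $u_a$, closedness and boundedness-from-above of $\mathcal{P}_a$, and $0 \in \mathcal{P}_a$.
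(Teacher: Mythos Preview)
Your argument is correct. The feasible set is nonempty because $(0,0)$ lies in it, and your compactness argument is sound: the key step---bounding $y$ from below via $p\cdot y\geqslant 0$ together with $y\leqslant\bar y$ and $p\gg 0$---is exactly what is needed, and you are right that this is where strict positivity of $p$ enters essentially.

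The paper does not give a self-contained proof at all; it simply invokes the ``weak dichotomy between the consumption and production decisions'' from \citet{Gilles2016}. That dichotomy is the observation that the problem splits into a profit-maximisation step, $\max\{\,p\cdot y\mid y\in\mathcal P_a\,\}$, followed by a standard consumer problem over the budget set $\{\,x\in\mathbb R^\ell_+\mid p\cdot x\leqslant I(a,p)\,\}$. Existence in each sub-problem is then argued separately. Your approach is more elementary and direct: rather than decomposing, you treat the feasible set in $\mathbb R^\ell_+\times\mathbb R^\ell$ as a single object and show it is compact. The dichotomy route has the advantage of making the structure of optimal plans transparent (any solution has $g$ maximising income), which is used later in the paper, e.g.\ in the proof of Theorem~\ref{thm:CE}(a); your route has the advantage of being fully self-contained and of isolating precisely which hypotheses are doing work, as your final paragraph notes.
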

A proof of this existence result follows immediately from the weak dichotomy between the consumption and production decisions stated in \citet{Gilles2016}.

\paragraph{Competitive equilibria.}

An equilibrium in an economy can now be introduced as a feasible allocation that is supported by a price vector for which nearly all consumer-producers select optimal consumption and production plans.
\begin{definition}
Consider an economy $\mathbb{E} = \langle \, (A, \Sigma ,\mu ) , u , \mathcal{P} \rangle$.
\\
An allocation $(f,g)$ is a \textbf{competitive equilibrium} in  $\mathbb{E}$ if $(f,g)$ is feasible and there exists some strictly positive price vector $p \gg 0$ such that $\int f \, d\mu = \int g \, d\mu$ and for almost every agent $a \in A$, the consumption-production plan $(f(a),g(a))$ solve $a$'s consumer-producer problem (\ref{CP-Prob}).
\\
The price vector $p$ is denoted as an\textbf{ equilibrium price} for the equilibrium allocation $(f,g)$.
\\
The collection of all equilibrium allocations in the economy $\mathbb E$ is denoted by
\begin{equation}
	\mathcal W (E) = \{ (f,g) \mid (f,g) \mbox{ is a competitive equilibrium in } \mathbb E \} .
\end{equation}
\end{definition}
We restrict our focus on competitive equilibria with strictly positive equilibrium prices only. This restriction is more plausible in the setting of an economy with agent-based production as considered here. This restriction also strengthens the Core equivalence result developed subsequently in this paper, since it reduces the size of the set of considered equilibria in the economy.

The existence of competitive equilibria in these economies with an endogenous social division of labour is addressed in \citet{Gilles2016}, in particular Section 3 and Theorem 3.6. There it is shown that to show existence of meaningful competitive equilibria, the production sets of the consumer-producers in the economy have to satisfy properties describing Increasing Returns to Specialisation. We refer to Section 4 of this paper for an elaborate discussion.

\paragraph{A Core concept.}

The Core of a continuum economy with production was seminally set out in \citet{Hildenbrand1968}. The model proposed was that of a coalition production economy in which production was assigned to coalitions rather than separate social production organisations or ``producers''. In our notion of an economy with an endogenous social division of labour, production is assigned to the consumer-producers individually, which results in a natural formulation of coalitional improvement in the Edgeworthian bargaining process. Consequently, we are able to formulate the concept of an Edgeworthian Core allocation in a rather straightforward fashion.
\begin{definition}
Let $\, \mathbb{E} = \langle \, (A, \Sigma ,\mu ), u, \mathcal{P} \rangle$ be some economy. A coalition $S \in \Sigma$ \textbf{is able to improve upon} an allocation $(f,g)$ in $\mathbb{E}$ if $S$ is non-negligible in the sense that $\mu (S)>0$ and there exists a coalitional allocation $(f',g') \colon S \to \mathbb R^{\ell}_+ \times \mathbb R^{\ell}$ such that
\begin{numm}
\item $u_a \left( f'(a) \, \right) > u_a (f(a))$ for almost every agent $a \in S$, and

\item $\int_S f' \, d\mu \leqslant \int_S g' \, d\mu$.
\end{numm}
A feasible allocation $(f,g)$ in $\mathbb{E}$ is a \textbf{Core allocation} if there is no coalition $S \in \Sigma$ that can improve upon it. We denote by
\begin{equation}
	\mathcal C (\mathbb E) = \{ (f,g) \mid (f,g) \mbox{ is a Core allocation} \, \}
\end{equation}
the \textbf{Core} of the economy $\mathbb E$.
\end{definition}
A coalition that can improve upon a given allocation is also known as a \emph{blocking coalition}.

\subsection{An extension of the Vind-Schmeidler Theorem.}

One of the classical questions in mathematical economics has been what type of blocking coalitions there are. One particular question is about the size of the blocking coalition. This was pursued in \citet{Schmeidler1972,Grodal1972,Vind1972} and \citet{GWY2007}. These results can quite naturally be transferred to the setting of an economy with an endogenous social division of labour.

The next result addresses the size of blocking coalitions there emerge in a continuum economy. It extends the insights from \citet{Schmeidler1972} and \citet{Vind1972} to the realm of continuum economies with an endogenous social division of labour. In particular, in the context of our notion of an economy, the concept of a consumer-producer incorporates the property that production is completely scalable as well through the adaptation of the social division of labour. This property carries over to blocking coalitions in the sense that improving on a proposed allocation can be done through a set of completely specialised coalitions of consumer-producers.
\begin{theorem} \label{thm:VindSchmeidler}
	Let $\, \mathbb{E} = \langle \, (A, \Sigma ,\mu ), u, \mathcal{P} \rangle$ be some continuum economy, in which $(A, \Sigma ,\mu )$ is a complete atomless probability space for which the properties stated in Definition \ref{def:economy} hold and all preferences are strictly monotone. Let $0 < \delta <1$ be any real number. Then for any non-Core allocation $(f,g)$ in $\mathbb{E}$, there exists a blocking coalition $S \in \Sigma$ with $\mu (S) = \delta$.
\end{theorem}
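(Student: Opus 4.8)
The plan is to adapt the Vind-type argument from pure exchange to this production setting, exploiting the fact that, unlike in coalition production economies, production here is attached to individuals and therefore is automatically ``additive'' across subcoalitions. First I would take a non-Core allocation $(f,g)$ and a blocking coalition $S_0 \in \Sigma$ with blocking allocation $(f'_0, g'_0)$, so that $u_a(f'_0(a)) > u_a(f(a))$ a.e.\ on $S_0$ and $\int_{S_0} f'_0 \, d\mu \leqslant \int_{S_0} g'_0 \, d\mu$. The two regimes $\mu(S_0) \geqslant \delta$ and $\mu(S_0) < \delta$ are handled separately, and the first is the genuinely new case for production economies.

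\medskip\noindent
\emph{Case $\mu(S_0) \geqslant \delta$ (shrinking the coalition).} Here I would use a Lyapunov-type argument. Consider the finite vector measure on $(S_0, \Sigma \cap S_0)$ given by $T \mapsto \bigl( \mu(T), \int_T f'_0 \, d\mu - \int_T g'_0 \, d\mu \bigr) \in \mathbb{R} \times \mathbb{R}^{\ell}$; since $\mu$ is atomless, Lyapunov's theorem gives that its range is convex (and compact). The point corresponding to $T = S_0$ is $(\mu(S_0), z)$ with $z \leqslant 0$, and the point corresponding to $T = \varnothing$ is $(0,0)$. By convexity there is a measurable $S \subset S_0$ realising the value $\tfrac{\delta}{\mu(S_0)} \cdot (\mu(S_0), z) = (\delta, \tfrac{\delta}{\mu(S_0)} z)$, so $\mu(S) = \delta$ and $\int_S f'_0 \, d\mu \leqslant \int_S g'_0 \, d\mu$ (as $\tfrac{\delta}{\mu(S_0)} z \leqslant 0$). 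Restricting $(f'_0, g'_0)$ to $S$ still dominates $(f,g)$ a.e.\ on $S$, so $S$ blocks and $\mu(S) = \delta$.

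\medskip\noindent
\emph{Case $\mu(S_0) < \delta$ (enlarging the coalition).} This is where strict monotonicity of preferences and the non-triviality of production (Axiom~\ref{ax:basic}(iii)) enter, and where I expect the main obstacle. The idea is to adjoin to $S_0$ an extra measurable set $R \subset A \setminus S_0$ with $\mu(R) = \delta - \mu(S_0)$, and to show that the agents of $R$ can be compensated out of their \emph{own} productive slack plus the (strict) surplus generated on $S_0$. Concretely: the blocking inequality on $S_0$ may be strict in some coordinate, or can be made so by perturbing $f'_0$ slightly downward using strict monotonicity; and each agent $a \in R$ has, by Axiom~\ref{ax:basic}(iii), a production plan in $\mathrm{Conv}\,\mathcal{P}_a$ yielding some $z_a > 0$, which (after integrating, using that $A\setminus S_0$ can be chosen so the relevant selection is integrable and that $g$ restricted to $R$ is a feasible baseline) provides a strictly positive aggregate endowment of goods on $R$. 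One then splits the combined strictly-positive slack on $S_0 \cup R$ among the agents of $R$ to give each a bundle strictly larger than $f(a)$, invoking strict monotonicity to conclude $u_a$ strictly increases. On $S_0$ one keeps (the possibly perturbed) $f'_0$. The hard part is bookkeeping: ensuring the aggregate feasibility inequality $\int_{S_0 \cup R} f'' \, d\mu \leqslant \int_{S_0 \cup R} g'' \, d\mu$ holds while simultaneously keeping every agent in $R$ strictly above $f(a)$ and keeping all selections integrable; this requires care because the surplus from $S_0$ alone need not dominate the shortfall of the $R$-agents coordinate by coordinate, which is precisely why the own-production slack of the $R$-agents (Axiom~\ref{ax:basic}(iii)) must be brought in, and why one may need to iterate the enlargement finitely many times (doubling $\mu(S_0)$ at each step until it exceeds $\delta$, then shrinking back down via Case~1).

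\medskip\noindent
Finally I would assemble the two cases: given arbitrary $0 < \delta < 1$ and a non-Core allocation, produce \emph{some} blocking coalition, then move its measure to exactly $\delta$ by shrinking (Case~1) if it is too large, or by the enlargement construction followed by shrinking if it is too small. The continuum (atomless) hypothesis is used essentially in Case~1 via Lyapunov; strict monotonicity and Axiom~\ref{ax:basic}(iii) are used essentially in Case~2.
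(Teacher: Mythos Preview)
Your shrinking case ($\mu(S_0)\geqslant\delta$) via Lyapunov is correct and matches the paper. The enlarging case, however, has a genuine gap: your remedy of invoking Axiom~\ref{ax:basic}(iii) to give the added agents in $R$ some output $z_a>0$ from $\mathrm{Conv}\,\mathcal P_a$ does not work. Agents in $R$ must receive consumption strictly above $f(a)$, and $f(a)$ bears no relation to $z_a$; the surplus $y>0$ extracted from $S_0$ together with $\int_R z_a\,d\mu$ need not cover $\int_R f\,d\mu$. If instead you ``keep $g$ on $R$ as a baseline'' and hand out $f(a)+\text{bonus}$, feasibility on $S_0\cup R$ reduces to $\int_R(f-g)\,d\mu\leqslant y-\text{bonus}$, and nothing in your sketch controls $\int_R(f-g)\,d\mu$. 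Iteration does not help: the same obstruction recurs at every step.

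The missing idea (Vind's) is to exploit the \emph{global} feasibility of $(f,g)$ and to \emph{mix on $S_0$} rather than keep $(f'_0,g'_0)$ there. On the added set $T\subset A\setminus S_0$ one assigns production $g(a)$ and consumption $f(a)+\tfrac{\varepsilon}{\mu(T)}\,y$; on $S_0$ one replaces $(f'_0,g'_0)$ by selections $(\hat f,\hat g)$ whose integrals equal the $\varepsilon$-convex combinations $\varepsilon\int_{S_0}f''\,d\mu+(1-\varepsilon)\int_{S_0}f\,d\mu$ and $\varepsilon\int_{S_0}g'_0\,d\mu+(1-\varepsilon)\int_{S_0}g\,d\mu$. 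Such selections exist by Lyapunov applied to the preferred-set correspondence and to $\mathcal P$ on $S_0$---this is precisely how the nonconvexity of $\mathcal P_a$ is absorbed. Choosing $T$ via Lyapunov so that $\int_T(f-g)\,d\mu=(1-\varepsilon)\int_{A\setminus S_0}(f-g)\,d\mu$ and using $\int_A f\,d\mu\leqslant\int_A g\,d\mu$, the net-trade terms telescope and feasibility on $S_0\cup T$ follows; taking $\varepsilon=(1-\delta)/(1-\mu(S_0))$ yields $\mu(S_0\cup T)=\delta$ in one step. Axiom~\ref{ax:basic}(iii) plays no role in this argument.
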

For a proof of Theorem \ref{thm:VindSchmeidler} we refer to Appendix \ref{App-VindSchmeidler} of this paper.

\subsection{Core equivalence}

One of the classical results from the Edgeworthian approach to trade has been the Core equivalence theorem, seminally established for pure exchange economies by \citet{Aumann1964} and for coalition production economies by \citet{Hildenbrand1968}. Here, we are able to establish full Core equivalence for a continuum economy with an endogenous social division of labour, supplementing the results stated in \citet{Hildenbrand1968,Hildenbrand1974} for coalition production economies. We are able to establish the equivalence of the Core and the set of competitive equilibria supported by strictly positive prices for economies with an endogenous social division of labour under the accepted monotonicity hypotheses.
\begin{theorem} \label{thm:CE}
\textbf{\emph{(Core Equivalence)}} \\
Let $\, \mathbb{E} = \langle \, (A, \Sigma ,\mu ), u, \mathcal{P} \rangle$ be some economy that satisfies the properties stated in Definition \ref{def:economy}.
\begin{abet}
\item Every competitive equilibrium in $\mathbb{E}$ is a Core allocation.

\item If $(A, \Sigma ,\mu )$ is an atomless probability space and all utility functions $u_a$, $a \in A$, are strictly monotone, then every Core allocation in $\mathbb{E}$ can be supported as a competitive equilibrium with a strictly positive equilibrium price vector, i.e., $\mathcal C (\mathbb E) = \mathcal W (\mathbb E)$.
\end{abet}
\end{theorem}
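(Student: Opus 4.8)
The plan is to follow the classical arguments of \citet{Aumann1964} and \citet{Hildenbrand1968}, exploiting that on an atomless agent space the aggregate of the individually held production sets over any coalition is a convex ``coalition production set''; the extra effort beyond those references goes into extracting a \emph{strictly positive} supporting price. Part (a) is the routine direction. Suppose $(f,g)$ is a competitive equilibrium supported by $p\gg 0$ but that some $S\in\Sigma$ with $\mu(S)>0$ improves upon it through a coalitional allocation $(f',g')$. For almost every $a\in S$ the plan $(f(a),g(a))$ solves the consumer-producer problem~(\ref{CP-Prob}) while $u_a(f'(a))>u_a(f(a))$; since $(f',g')$ is a coalitional allocation we have $f'(a)\in\mathbb{R}^{\ell}_+$ and $g'(a)\in\mathcal{P}_a$, so the plan $(f'(a),g'(a))$ cannot be budget-feasible for $a$, i.e.\ $p\cdot f'(a)>p\cdot g'(a)$. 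Integrating this strict pointwise inequality over the non-null set $S$ gives $p\cdot\int_S f'\,d\mu>p\cdot\int_S g'\,d\mu$, contradicting $\int_S f'\,d\mu\leqslant\int_S g'\,d\mu$ together with $p\gg 0$. Hence no blocking coalition exists and $(f,g)\in\mathcal{C}(\mathbb{E})$.

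For part (b), let $(f,g)$ be a Core allocation in an atomless economy with strictly monotone preferences. I would first record that $\int f\,d\mu=\int g\,d\mu$: a non-zero surplus $\int g\,d\mu-\int f\,d\mu\geqslant 0$ shared equally among all agents makes everyone strictly better off by strict monotonicity, so the grand coalition would block. Next, introduce the correspondence of strictly preferred net trades $\Psi\colon A\to 2^{\mathbb{R}^{\ell}}$, $\Psi(a)=\{\,x-y\mid x\in\mathbb{R}^{\ell}_+,\ u_a(x)>u_a(f(a)),\ y\in\mathcal{P}_a\,\}$; it has a measurable graph by the measurability conditions of Definition~\ref{def:economy} and continuity of $u_a$, admits the integrable selection $a\mapsto f(a)+e-g(a)$, and obeys $\Psi(a)+\mathbb{R}^{\ell}_+\subseteq\Psi(a)$ by monotonicity. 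Put $\Gamma=\int_A\Psi\,d\mu$. Atomlessness of $\mu$ makes $\Gamma$ convex (the Lyapunov--Richter convexity of integrals of correspondences), and $0\notin\Gamma$, because an integrable selection $f'-g'$ of $\Psi$ with $\int_A(f'-g')\,d\mu=0$ would display $A$ itself as a blocking coalition. Separating $\{0\}$ from the convex set $\Gamma\subset\mathbb{R}^{\ell}$ yields $p\neq 0$ with $p\cdot z\geqslant 0$ for all $z\in\Gamma$, and $\Psi(a)+\mathbb{R}^{\ell}_+\subseteq\Psi(a)$ forces $p\geqslant 0$.

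The heart of the proof is to turn this single aggregate price into a.e.\ individual optimality, using the non-blocking property of \emph{every} non-null subcoalition. Given any non-null $B\in\Sigma$ and any measurable selections $x_B(\cdot)\in\mathbb{R}^{\ell}_+$ with $u_a(x_B(a))>u_a(f(a))$ and $y_B(\cdot)$ of $\mathcal{P}$ on $B$, extend to an integrable selection $h_{\varepsilon}$ of $\Psi$ on all of $A$ by $h_{\varepsilon}(a)=(f(a)+\varepsilon e)-g(a)$ for $a\notin B$; then $\int_A h_{\varepsilon}\,d\mu\in\Gamma$, so $p\cdot\int_A h_{\varepsilon}\,d\mu\geqslant 0$, and letting $\varepsilon\downarrow 0$ while using $\int_A(f-g)\,d\mu=0$ gives $p\cdot\int_B(x_B-y_B)\,d\mu\geqslant p\cdot\int_B(f-g)\,d\mu$. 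Taking $y_B=g|_B$, together with a routine measurable-selection/density step, gives, for a.e.\ $a$, that $u_a(x)>u_a(f(a))$ implies $p\cdot x\geqslant p\cdot f(a)$; taking $x_B=f|_B+\varepsilon e$ and letting $\varepsilon\downarrow 0$ gives, for a.e.\ $a$, that $p\cdot y\leqslant p\cdot g(a)$ for every $y\in\mathcal{P}_a$, i.e.\ $g(a)$ maximises profit at $p$. One then checks the budget binds pointwise, $p\cdot f(a)=p\cdot g(a)$ a.e.: the set where $p\cdot f(a)>p\cdot g(a)$ must be null, since otherwise its complement carries a strictly positive aggregate value surplus which, after reallocating production within the complement (convexity of $\int\mathcal{P}\,d\mu$) and redistributing, lets it block by strict monotonicity; with $\int_A p\cdot f\,d\mu=\int_A p\cdot g\,d\mu$ this forces equality a.e. Strict monotonicity and non-triviality (Axiom~\ref{ax:basic}(iii)) then force $p\gg 0$ and $p\cdot g(a)=\sup_{\mathrm{Conv}\,\mathcal{P}_a}p\cdot y>0$ a.e., after which the standard ``cheaper point'' argument (continuity of $u_a$, positivity of this common wealth level) upgrades expenditure minimisation to the strict form $u_a(x)>u_a(f(a))\Rightarrow p\cdot x>p\cdot g(a)$, which is exactly optimality in~(\ref{CP-Prob}). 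Hence $(f,g)\in\mathcal{W}(\mathbb{E})$, and with part (a), $\mathcal{C}(\mathbb{E})=\mathcal{W}(\mathbb{E})$.

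I expect the main obstacle to be precisely this passage from the aggregate separation to a.e.\ individual optimality: making the subcoalition perturbation rigorous (the right integrability for the measurable selections, the limit $\varepsilon\downarrow 0$, and the density argument from integral inequalities valid for all $B$ to pointwise inequalities), establishing the pointwise budget identity $p\cdot f(a)=p\cdot g(a)$, and — the point where the statement genuinely strengthens the quasi-/pseudo-equilibrium conclusions of \citet{Hildenbrand1968} and \citet{Oddou1982} — upgrading the supporting price to $p\gg 0$. All of these rest on the convexity of integrals of the production and preferred-net-trade correspondences over the atomless agent space, together with strict monotonicity and the non-triviality axiom.
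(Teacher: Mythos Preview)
Your plan for part (a) and the overall architecture of part (b) --- separate an integrated ``preferred net trade'' correspondence from the origin, then disaggregate to pointwise optimality --- is the same Aumann--Hildenbrand route the paper takes. The difference that matters is in the definition of that correspondence. The paper sets
\[
\mathcal{Z}(a)=\bigl(\mathcal{U}(a)-\mathcal{P}_a\bigr)\cup\{0\},
\]
separates the convex set $\int\mathcal{Z}\,d\mu$ from $\mathbb{R}^{\ell}_-$, and then uses $0\in\mathcal{Z}(a)$ to get $\inf p\cdot\mathcal{Z}(a)=0$ for a.e.\ $a$. This yields directly the key pointwise inequality
\[
u_a(x)>u_a(f(a))\ \Longrightarrow\ p\cdot x\geqslant I(a,p)\equiv\sup p\cdot\mathcal{P}_a,
\]
from which $p\cdot f(a)=I(a,p)=p\cdot g(a)$ a.e.\ follows immediately by continuity and feasibility (if $p\cdot f(a)>I(a,p)$ on a non-null set, then $p\cdot\int f>\int I(\cdot,p)\geqslant p\cdot\int g$, contradicting $\int f\leqslant\int g$). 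Your $\Psi(a)=\mathcal{U}(a)-\mathcal{P}_a$ omits $\{0\}$, so you must instead extend selections on $B$ to all of $A$ by the genuine selection $(f+\varepsilon e)-g$ and pass $\varepsilon\downarrow 0$ using $\int(f-g)=0$. That perturbation yields only the pointwise inequality $p\cdot(x-y)\geqslant p\cdot(f(a)-g(a))$, which gives $p\cdot g(a)=I(a,p)$ and $u_a(x)>u_a(f(a))\Rightarrow p\cdot x\geqslant p\cdot f(a)$, but \emph{not} $p\cdot x\geqslant I(a,p)$.

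This is where your argument breaks. Your proposed step to recover $p\cdot f(a)=p\cdot g(a)$ a.e.\ --- that if $\{p\cdot f>p\cdot g\}$ were non-null then its complement would carry a positive $p$-value surplus and could block after ``reallocating production'' --- is not valid: a surplus in $p$-value, $\int_C p\cdot g>\int_C p\cdot f$, does not produce a physical surplus $\int_C g'\geqslant\int_C f'$ for any coalitional allocation on $C$, and convexity of $\int_C\mathcal{P}\,d\mu$ does not help you manufacture one. Without the budget identity you cannot run the cheaper-point upgrade or conclude $p\gg 0$. The cleanest repair is exactly the paper's device: adjoin $\{0\}$ to $\Psi(a)$ (equivalently, allow agents outside $B$ to opt out with zero net trade rather than $(f+\varepsilon e)-g$) and separate from $\mathbb{R}^{\ell}_-$; then the disaggregation gives $I(a,p)$ on the right-hand side at once, and the remainder of your outline --- profit maximisation, $p\gg 0$ from strict monotonicity, $I(a,p)>0$ from Axiom~\ref{ax:basic}(iii), and the cheaper-point argument --- goes through as you describe.
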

A proof of Theorem \ref{thm:CE} is provided in Appendix \ref{App-D} of this paper.

\medskip\noindent
The Core equivalence result establishes that our framework is a natural representation of modelling production in a continuum economy without relying on weaker concepts such as pseudo- and quasi-equilibria.

\section{The Core and the social division of labour}

\subsection{Increasing returns to specialisation}

The generation of economic wealth in a social division of labour is founded on the presence of increasing returns to specialisation: Specialising in the production of a single good leads to disproportional increases in the marginal product of that particular good. In the framework developed here, this fundamental hypothesis is introduced in two forms.
\begin{definition} \label{def:IRSpec}
Consider a production set $\mathcal{P} \subset \mathbb{R}^{\ell}$ satisfying the properties stated in Axiom \ref{ax:basic}.
\begin{description}
\item[Full specialisation:] Let $k \in \{ 1, \ldots , \ell \}$ be some commodity. A production plan $z^k \in \mathcal P$ is a \textbf{full specialisation production plan} for commodity $k $ if there exists some positive output quantity $Q^k >0$ and some input vector $y^k \in \mathbb{R}^{\ell}_+$ with $y^k_k =0$ such that $z^k = Q^k e_k - y^k$.
\item[IRSpec:] The production set $\, \mathcal{P}$ exhibits \textbf{Increasing Returns to Specialisation} if for every commodity $k \in \{ 1, \ldots , \ell \}$ there exists some full specialisation production plan $z^k = Q^k e_k - y^k \in \mathcal P$ such that \footnote{Here, we use the notational convention that $\mathrm{Conv} \, S = \{ \lambda x + (1- \lambda )y \mid x,y \in S$ and $0 \leqslant \lambda \leqslant 1 \}$ denotes the convex hull of the set $S$.}
\begin{equation}
\label{eq:WIRSpec}
\mathcal Q \subset \mathcal{P} \subset \mathrm{Conv} \, \mathcal{Q} - \mathbb{R}^{\ell}_+
\end{equation}
where $\mathcal{Q} = \left\{ \, z^1 , \ldots , z^{\ell} \, \right\}$ is the set of these $\ell$ full specialisation production plans.
\end{description}
\end{definition}
The definition above reflects the traditional idea underpinning the social division of labour \citep[Chapter 1]{Smith1776} that specialising in the production of a single good increases one's productivity. If a production set exhibits IRSpec, the production set is simply non-convex with the maximal production bundles achieved along the $\ell$ full specialisation production plans. Here, $Q^k >0$ denotes the quantity of good $k$ that the consumer-producer can produce under full specialisation, requiring input vector $y^k \geqslant 0$. This notion was seminally introduced in \citet{Gilles2016}, where two variations of Increasing Returns to Specialisation are considered. The property introduced here is denoted as Weakly Increasing Returns to Specialisation (WIRSpec) in \citet{Gilles2016}.\footnote{\citet{Gilles2016} also introduces the related notion of Strictly Increasing Returns to Specialisation (SIRSpec), which additionally imposes that $\mathcal P \cap ( \mathrm{Conv} \mathcal Q ) = \mathcal Q$. This imposes that there are strict productivity increases from specialising in the production of a single output. This implies that income maximisers are exactly those full specialisation production plans.}

Obviously, WIRSpec implies that the production set $\mathcal{P}$ is strictly bounded. We also remark that, if $\mathcal{P}$ is home-based, then $\mathcal{P}$ satisfies WIRSpec if there exist output quantities $Q^k >0$ for all $k \in \{ 1, \ldots ,\ell \}$ with $\mathcal{P} \subset \mathrm{Conv} \, \mathcal{Q} - \mathbb{R}^{\ell}_+$ where $\mathcal{Q} = \left\{ Q^1 e_1 , \ldots , Q^{\ell} e_{\ell} \right\}$.

For an elaborate discussion of these concepts and their consequences we also refer to \citet{Gilles2016}.

\subsection{A Structured Core concept}

For the subsequent discussion of the role of the social division of labour in the Edgeworthian trade processes underlying the Core, we need to introduce the notion that the production of goods in an economy is founded on a given set of professions or specialisations. In particular, we focus on the ability of professional associations or \emph{guilds} to arise that can exercise the power to block a certain allocation. This underlies the following definition of a modification of the Core concept.
\begin{definition}
	Let $\mathbb{E} = \langle \, (A, \Sigma ,\mu ), u, \mathcal{P} \rangle$ be some economy satisfying the properties of Definition \ref{def:economy} such that almost all agents are endowed with a production set that satisfies IRSpec. \\ Denote by $\mathcal{Q}_a \subset \mathbb{R}^{\ell}$ the set of $\ell$ full specialisation production plans for agent $a \in A$ with the property that $\mathcal{Q}_a \subset \mathcal{P}_a \subset \mathrm{Conv} \, \mathcal{Q}_a - \mathbb{R}^{\ell}_+$, as introduced in Definition \ref{def:IRSpec}.
	\begin{numm}
		\item Let $S \in \Sigma$ with $\mu (S) >0$ be a non-negligible coalition. A production plan $g \colon S \to \mathbb R^{\ell}$\textbf{structures} the coalition $S$ through an internal social division of labour if for every $k \in \{ 1, \ldots ,\ell \}$ there exists a coalition $S_k \in \Sigma$ with $\mu (S_k) >0$ such that
		\begin{enumerate}
			\item $\{ S_1 , \ldots , S_{\ell} \}$ partitions $S$, i.e., $S = \cup^{\ell}_{k=1} S_k$ and $S_m \cap S_{m'} = \varnothing$ for all $m \neq m'$ and;
			\item every agent $a \in S_k$ is specialised in the production of commodity $k$ only, in the sense that $g(a) = Q^k (a) e_k - y^k(a) \in \mathcal Q_a \subset \mathcal P_a$ is a full specialisation production plan for commodity $k$ for agent $a \in S_k$.
		\end{enumerate} 
		\item The allocation $(f,g)$ is a \textbf{Structured Core allocation} if there is no non-negligible coalition $S \in \Sigma$ with $\mu (S) >0$ that can improve upon $(f,g)$ through an allocation $(f',g') \colon S \to \mathbb R^{\ell}_+ \times \mathbb R^{\ell}$ such that the production plan $g'$ structures the coalition $S$ through an internal social division of labour.
	\end{numm}
	The collection of all Structured Core allocations in $\mathbb E$ is denoted by $\mathcal C^S ( \mathbb E)$.
\end{definition}
A coalition that organises itself through an internal social division of labour based on full specialisation production plans only can be interpreted as an alliance between $\ell$ different professional guilds, which members specialise in the production of only one good. The Structured Core now collects exactly those allocations that cannot be blocked through such alliances. It means that trade only occurs between fully specialised economic agents in the prevailing social division of labour. This imposes restrictions on blocking, typically enlarging the Core considerably, i.e., $\mathcal C ( \mathbb E) \subset \mathcal C^S ( \mathbb E)$.

The next theorem asserts that there is an equivalence between Core and Structured Core allocations in continuum economies in which the production technologies satisfy the increasing returns to specialisation (IRSpec) property, subject to standard regularity conditions on the preferences of the economic agents.

\begin{theorem} \textbf{\emph{(Structured Core Equivalence)}} \label{thm:Blocking} \\
Let $\, \mathbb{E} = \langle \, (A, \Sigma ,\mu ), u, \mathcal{P} \rangle$ be a continuum economy, for which the assumptions formulated in Definition \ref{def:economy} hold and almost all agents are endowed with a production set that satisfies IRSpec. \\ Then, if all utility functions $u_a$, $a \in A$, are strictly monotone, every non-Core allocation can be improved upon by a non-negligible coalition $S \in \Sigma$ with an internal social division of labour, i.e.,
\[
\mathcal C^S ( \mathbb E) = \mathcal C (\mathbb E) = \mathcal W (\mathbb E).
\]
\end{theorem}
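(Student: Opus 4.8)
The plan is to close the displayed chain of identities by observing that the inclusion $\mathcal{C}(\mathbb{E}) \subseteq \mathcal{C}^S(\mathbb{E})$ is immediate — a structured improvement is in particular an improvement, so restricting the admissible blocking coalitions can only enlarge the Core — and that $\mathcal{C}(\mathbb{E}) = \mathcal{W}(\mathbb{E})$ is exactly Theorem \ref{thm:CE}(b), whose hypotheses hold here because a continuum economy is atomless and all $u_a$ are strictly monotone by assumption. Everything then reduces to the reverse inclusion $\mathcal{C}^S(\mathbb{E}) \subseteq \mathcal{C}(\mathbb{E})$, i.e. to showing that every feasible non-Core allocation $(f,g)$ can be improved upon by a non-negligible coalition carrying a full internal social division of labour.

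So I would fix such a $(f,g)$ together with an ordinary blocking coalition $T$ and a pair $(f'',g'')$ with $g''(a) \in \mathcal{P}_a$, $u_a(f''(a)) > u_a(f(a))$ for almost every $a \in T$, and $\int_T f''\,d\mu \leqslant \int_T g''\,d\mu$. The heart of the argument is an aggregate convexification of production over $T$. Because $(A,\Sigma,\mu)$ is atomless, a standard consequence of Lyapunov convexity is that $\int_T \mathcal{Q}_a\,d\mu$ is convex and equals $\int_T \mathrm{Conv}\,\mathcal{Q}_a\,d\mu$; combining this with the IRSpec sandwich $\mathcal{Q}_a \subset \mathcal{P}_a \subset \mathrm{Conv}\,\mathcal{Q}_a - \mathbb{R}^\ell_+$ and the comprehensiveness of $\mathcal{P}_a$ gives the identity
\[
\int_T \mathcal{P}_a\,d\mu \;=\; \int_T \mathcal{Q}_a\,d\mu \,-\, \mathbb{R}^\ell_+ ,
\]
so that at the level of coalitional aggregates, structuring a coalition by full specialisation costs nothing relative to arbitrary production. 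In particular $\int_T g''\,d\mu$ belongs to $\int_T \mathcal{Q}_a\,d\mu - \mathbb{R}^\ell_+$, hence there is an integrable selection $\zeta$ with $\zeta(a) \in \{z^1(a),\dots,z^\ell(a)\}$ and $\int_T \zeta\,d\mu \geqslant \int_T g''\,d\mu \geqslant \int_T f''\,d\mu$. Setting $S_k = \{a \in T : \zeta(a) = z^k(a)\}$ (breaking ties measurably) gives a measurable partition of $T$, and letting $g'$ equal $z^k$ on $S_k$ turns $(f'',g')$ into an improvement of $(f,g)$ in which every agent runs a full specialisation plan; this is the sought structured improvement provided $\mu(S_k) > 0$ for \emph{every} $k$.

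Establishing this last non-degeneracy is where the real work lies, and it is the main obstacle. Whenever a positive-measure subset of $T$ has genuine slack in its IRSpec decomposition, one re-chooses $\zeta$ on that subset so as to spread weight over all $\ell$ specialisations; since $z^k_k(a) = Q^k(a) > 0$, any non-null $S_k$ then automatically satisfies $\int_{S_k} \zeta\,d\mu \neq 0$, so this is enough. In the degenerate residual cases — for instance a block in which almost every agent performs home production of a single good — one instead adjoins to $T$, for each missing commodity $k$, a small non-negligible set $B_k$ of agents assigned their own full specialisation plan $z^k$ together with an improving consumption bundle, which exists because by strict monotonicity a sufficiently large bundle dominates $f(a)$. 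Feasibility of the enlarged coalition is recovered by first using strict monotonicity and a measurable selection to shrink the consumption profile $f''$ on $T$ just enough to create strictly positive aggregate slack in the commodities the recruits draw on, and then taking $\mu(B_k)$ small and choosing the recruits from a set on which the required improving bundles are uniformly bounded. Making this perturbation so that strict improvement almost everywhere, non-negativity of consumption, aggregate feasibility, and $\mu(S_k) > 0$ for all $k$ hold simultaneously is the delicate point; it uses precisely the atomlessness-plus-strict-monotonicity flexibility that drives Theorem \ref{thm:VindSchmeidler} and the Grodal-type argument of \citet{Grodal1972}. Once a structured improvement is produced we conclude $(f,g) \notin \mathcal{C}^S(\mathbb{E})$, which yields $\mathcal{C}^S(\mathbb{E}) \subseteq \mathcal{C}(\mathbb{E})$ and closes the chain.
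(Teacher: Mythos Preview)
Your main line---replace the blocking production plan by a selection from $\mathcal{Q}$ via Lyapunov convexification of $\int_T \mathcal{Q}_a\,d\mu$ on the atomless space---is exactly the paper's argument in Appendix~\ref{App-Blocking}. The paper's Lemma~\ref{lem:Q-delta} is essentially your identity $\int_T \mathcal{P}_a\,d\mu = \int_T \mathcal{Q}_a\,d\mu - \mathbb{R}^\ell_+$, and the selection $g''$ obtained there plays the role of your $\zeta$.

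Where you depart from the paper is in insisting that $\mu(S_k)>0$ for every $k$, as the definition of a structuring production plan requires. The paper's proof simply stops once an integrable selection from $\mathcal{Q}$ has been produced and does not verify this positivity condition; you are being more careful than the paper itself on this point. Your proposed repair is sound in spirit: use continuity and strict monotonicity to shave $f''$ down to some $f'''$ with $u_a(f'''(a))>u_a(f(a))$ still holding and strict aggregate slack $\int_T \zeta\,d\mu - \int_T f'''\,d\mu > 0$, then perturb the selection on small sets so that every specialisation appears. The recruitment of external sets $B_k$ is heavier than necessary, however. It suffices to carve disjoint $T_1,\dots,T_\ell\subset T$, each of measure $\varepsilon$, and reassign $\zeta=z^k$ on $T_k$; integrable boundedness of $\mathcal{P}$ (Definition~\ref{def:economy}(iii)) makes the resulting change in $\int_T\zeta\,d\mu$ of order $\varepsilon$, so the slack absorbs it for $\varepsilon$ small. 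This keeps the blocking coalition equal to $T$ and avoids the Grodal-style enlargement altogether.
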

A proof of Theorem \ref{thm:Blocking} is provided in Appendix \ref{App-Blocking}.

\bigskip\noindent
Using the proof of Theorem \ref{thm:VindSchmeidler}, a simple modification of the proof of Theorem \ref{thm:Blocking} implies the following corollary.
\begin{corollary}
Let $\, \mathbb{E} = \langle \, (A, \Sigma ,\mu ), u, \mathcal{P} \rangle$ be a continuum economy, for which the assumptions formulated in Definition \ref{def:economy} hold and almost all agents are endowed with a production set that satisfies IRSpec. Let $0 \leqslant \delta \leqslant 1$. \\ Then, if all utility functions $u_a$, $a \in A$, are strictly monotone, every non-Core allocation can be improved upon by a non-negligible coalition $S \in \Sigma$ with $\mu (S) = \delta$ that is structured through an internal social division of labour.
\end{corollary}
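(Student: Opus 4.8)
\medskip\noindent
The plan is to combine the two preceding results. Theorem~\ref{thm:Blocking} already guarantees that any non-Core allocation $(f,g)$ in $\mathbb E$ is improved upon by \emph{some} non-negligible coalition $T\in\Sigma$ that is structured through an internal social division of labour, say $T=T_1\cup\cdots\cup T_\ell$ with an associated allocation $(f',g')\colon T\to\mathbb R^\ell_+\times\mathbb R^\ell$ satisfying $u_a(f'(a))>u_a(f(a))$ for almost every $a\in T$, $g'(a)=Q^k(a)e_k-y^k(a)\in\mathcal Q_a$ for a.e.\ $a\in T_k$, and $\int_T f'\,d\mu\leqslant\int_T g'\,d\mu$. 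What remains is to convert $T$ into a structured blocking coalition of the prescribed size $\mu(S)=\delta$, which I would do by running the size-adjustment argument from the proof of Theorem~\ref{thm:VindSchmeidler} \emph{within the class of structured coalitions}. (As in Theorem~\ref{thm:VindSchmeidler} I work with $0<\delta<1$; the endpoints are degenerate, $\delta=0$ because a non-negligible coalition cannot have measure $0$, and $\delta=1$ being recovered by taking $S=A$ in the enlargement step below.)

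For the case $\delta\leqslant\mu(T)$ I would shrink $T$ proportionally. Applying Lyapunov's convexity theorem to the $\mathbb R^{3\ell}$-valued measure $E\mapsto\big(\mu(E\cap T_1),\dots,\mu(E\cap T_\ell),\int_E f'\,d\mu,\int_E g'\,d\mu\big)$ on measurable $E\subseteq T$, whose range is convex and contains both $0$ and the value at $E=T$, yields a measurable $S\subseteq T$ with $\mu(S\cap T_k)=\tfrac{\delta}{\mu(T)}\mu(T_k)>0$ for every $k$ and $\int_S f'\,d\mu=\tfrac{\delta}{\mu(T)}\int_T f'\,d\mu\leqslant\tfrac{\delta}{\mu(T)}\int_T g'\,d\mu=\int_S g'\,d\mu$. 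Then $S$ with the partition $\{S\cap T_1,\dots,S\cap T_\ell\}$ is again structured (each part non-negligible, the full-specialisation plans inherited from $g'$), still improves upon $(f,g)$ since condition~(i) of improvement survives restriction to $S\subseteq T$, and has $\mu(S)=\sum_k\tfrac{\delta}{\mu(T)}\mu(T_k)=\delta$.

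For the case $\delta>\mu(T)$ I would enlarge $T$ by adjoining a coalition $R\subseteq A\setminus T$ with $\mu(R)=\delta-\mu(T)$, organised internally into $\ell$ non-negligible guilds and assigned full-specialisation production plans drawn (by a measurable-selection argument from the correspondence $a\mapsto\mathcal Q_a$) from the members' own production sets, while members of $R$ receive consumption marginally above their $f$-bundles, which is possible by strict monotonicity and continuity of $u_a$. The set $R$, its internal $\ell$-partition, and the small consumption increments are to be chosen simultaneously by a further application of Lyapunov's theorem to a vector measure on $A\setminus T$ that tracks $\mu$, $\int f$, and the aggregate production attainable under the prescribed specialisations, exactly as the complement is decomposed in the proof of Theorem~\ref{thm:VindSchmeidler}; feasibility $\int_{T\cup R}\hat f\,d\mu\leqslant\int_{T\cup R}\hat g\,d\mu$ of the extended allocation $(\hat f,\hat g)$ then follows from the feasibility of $(f,g)$, the blocking inequality on $T$, and IRSpec, which ensures that each adjoined agent can be placed on a full-specialisation plan with strictly positive net output of the good in which it specialises.

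I expect the enlargement case to be the main obstacle. Unlike in the pure-exchange Vind--Schmeidler argument, the agents adjoined to $T$ cannot be assigned arbitrary feasible production but must be placed on the rigidly shaped full-specialisation plans in their sets $\mathcal Q_a$, so the resource accounting has to be balanced against \emph{what those plans actually produce}, and this must be done while keeping all $\ell$ guild-parts of $R$ non-negligible and the consumption of every adjoined agent strictly above $u_a(f(a))$. Carrying out this simultaneous bookkeeping — selecting $R$, its partition, and the consumption perturbation through a single Lyapunov step, and verifying the aggregate feasibility inequality using the slack from the feasibility of $(f,g)$ together with IRSpec — is the only point that goes beyond a routine transcription of the proofs of Theorems~\ref{thm:VindSchmeidler} and~\ref{thm:Blocking}.
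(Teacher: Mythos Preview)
Your overall strategy---structure first via Theorem~\ref{thm:Blocking}, then adjust the size while preserving the structure---is sound for the shrinking case but makes the enlargement case unnecessarily hard. The route the paper has in mind reverses the order of the two ingredients: first apply Theorem~\ref{thm:VindSchmeidler} to the given non-Core allocation to obtain a blocking coalition $S$ of the prescribed size $\mu(S)=\delta$ (with an arbitrary blocking production plan $g'(a)\in\mathcal P_a$), and only then run the argument of Appendix~\ref{App-Blocking} on \emph{that} coalition. The point is that the conversion to full-specialisation plans carried out there (Lemma~\ref{lem:Q-delta} and the step after it) works verbatim for any blocking coalition and does not change the coalition at all---it merely replaces $g'$ by a selection $g''$ with $g''(a)\in\mathcal Q(a)$ and $\int_S g''\,d\mu\geqslant\int_S g'\,d\mu$. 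Hence once you have a blocking coalition of measure $\delta$, you get a structured blocking coalition of measure $\delta$ immediately.

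This order eliminates your enlargement obstacle entirely: you never have to adjoin outside agents while simultaneously placing them on full-specialisation plans and re-doing the Vind resource accounting against those plans. In your ordering that step is genuinely delicate. IRSpec does give $g(a)\leqslant\bar g(a)$ for some $\bar g(a)\in\mathrm{Conv}\,\mathcal Q(a)$, and Lyapunov then yields $\int_R g\,d\mu\leqslant\int_R\hat g\,d\mu$ for a selection $\hat g$ in $\mathcal Q$, so the Vind bookkeeping can in principle be pushed through; but you must also arrange that all $\ell$ guild-parts of the adjoined set $R$ are non-negligible and that the consumption perturbation on $R$ stays strictly improving, all within a single Lyapunov step. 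That is a good deal more work than your sketch indicates, and it is entirely avoidable by swapping the order: size first, then structure.
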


\subsection{Structured Equilibria under IRSpec}

Under the Increasing Returns to Specialisation (IRSpec) property of the productive abilities of the consumer-producers in an economy, \citet{Gilles2016} showed that the selected production plans under any positive price vector are fully specialised \citep[Theorem 2.7]{Gilles2016}. This gives rise to introducing a class of competitive equilibria that incorporate this property.
\begin{definition}
Let $\, \mathbb{E} = \langle \, (A, \Sigma ,\mu ), u, \mathcal{P} \rangle$ be an economy, for which the assumptions formulated in Definition \ref{def:economy} hold and almost all agents $a \in A$ are endowed with a production set $\mathcal P_a$ that satisfies IRSpec for the set of $\ell$ full specialisation production plans $\mathcal Q_a = \{ z^1_a , \ldots , z^{\ell}_a \}$.
\\
An allocation $(f,g)$ is a \textbf{structured equilibrium} in $\mathbb E$ if there exists a positive price vector $p \gg 0$ such that $\int f \, d\mu = \int g \, d\mu$ and for almost every agent $a \in A$, the consumption-production plan $(f(a),g(a))$ solve $a$'s consumer-producer problem (\ref{CP-Prob}) with $g(a) \in \mathcal Q_a$.
\\
The class of structured equilibria of $\, \mathbb E$ is denoted by $\mathcal W^S ( \mathbb E ) \subset \mathcal W (\mathbb E )$.
\end{definition}
The notion of a structured equilibrium refers to the endogenous organisation of the economy through a social division of labour through a price mechanism as seminally alluded to by \citet{Smith1759,Smith1776} as his ``invisible hand'' mechanism. Here I am particularly interested in the question whether interactions within coalitions that are internally organised through an appropriately chosen social division of labour lead to outcomes that are exactly those identified structured equilibria. Hence, do interactions in structured coalitions result in global outcomes that are guided through a price-based invisible hand to a global social division of labour?

Our insights indeed confirm that this is the case. Formally, this can be stated as a corollary from the previously shown insights. In particular, application of Theorem 2.7 of \citet{Gilles2016} and the Structured Core Equivalence Theorem \ref{thm:Blocking} results in the following corollary:
\begin{corollary}
Let $\, \mathbb{E} = \langle \, (A, \Sigma ,\mu ), u, \mathcal{P} \rangle$ be an economy, for which the assumptions formulated in Definition \ref{def:economy} hold and almost all agents $a \in A$ are endowed with a production set $\mathcal P_a$ that satisfies IRSpec for the set of $\ell$ full specialisation production plans $\mathcal Q_a$.
\\
Then, if all utility functions $u_a$, $a \in A$, are strictly monotone, then every structured core allocation can be supported as a structured equilibrium, i.e.,
\[
\mathcal C^S ( \mathbb E) = \mathcal C (\mathbb E) = \mathcal W (\mathbb E) = \mathcal W^S (\mathbb E).
\]
\end{corollary}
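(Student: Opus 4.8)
The plan is to reduce everything to the already-established chain $\mathcal C^S(\mathbb E) = \mathcal C(\mathbb E) = \mathcal W(\mathbb E)$ of Theorem~\ref{thm:Blocking} and then to close the loop by proving $\mathcal W(\mathbb E) = \mathcal W^S(\mathbb E)$. One inclusion, $\mathcal W^S(\mathbb E) \subseteq \mathcal W(\mathbb E)$, is immediate from the definition of a structured equilibrium. So the entire content is the reverse inclusion: every competitive equilibrium allocation is (possibly after reassigning which agent produces which good, an operation that leaves consumption and utilities untouched) a structured equilibrium, supported by IRSpec through Theorem~2.7 of \citet{Gilles2016}.

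First I would fix a competitive equilibrium $(f,g) \in \mathcal W(\mathbb E)$ with supporting price $p \gg 0$ and record the standard observation that, because each $u_a$ is monotone, the budget constraint in (\ref{CP-Prob}) binds at the optimum and $g(a)$ maximises income $p \cdot y$ over $\mathcal P_a$ for a.e.\ $a$; otherwise a slightly larger bundle $f(a) + \varepsilon e$ would be affordable and strictly preferred. Next I would identify these income maximisers using IRSpec: since $\mathcal Q_a \subseteq \mathcal P_a \subseteq \mathrm{Conv}\,\mathcal Q_a - \mathbb R^{\ell}_+$ and $p \gg 0$, a linear functional attains its maximum over $\mathrm{Conv}\,\mathcal Q_a - \mathbb R^{\ell}_+$ at one of the $\ell$ full specialisation plans, so $\max_{y\in\mathcal P_a} p\cdot y = \max_{z\in\mathcal Q_a} p\cdot z =: M_a$; and writing $g(a) = \sum_k \lambda_k(a)\, z^k_a - w(a)$ with $\lambda(a)$ in the simplex and $w(a) \geqslant 0$ forces $w(a) = 0$ and $\lambda_k(a) > 0 \Rightarrow p\cdot z^k_a = M_a$. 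Hence $g(a) \in \mathrm{Conv}\,\mathcal Q_a^{\ast}$ for a.e.\ $a$, where $\mathcal Q_a^{\ast} = \{ z \in \mathcal Q_a \mid p\cdot z = M_a \}$; this is precisely the reading of Theorem~2.7 of \citet{Gilles2016} that I would invoke.

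The remaining step is a measurable-selection argument using atomlessness (in force throughout, as in Theorem~\ref{thm:Blocking}). The finite-valued correspondence $a \mapsto \mathcal Q_a^{\ast}$ has a measurable graph and is integrably bounded, so by the Lyapunov convexity theorem $\int_A \mathcal Q_a^{\ast}\,d\mu = \int_A \mathrm{Conv}\,\mathcal Q_a^{\ast}\,d\mu$. Since $g$ is an integrable selection of $a \mapsto \mathrm{Conv}\,\mathcal Q_a^{\ast}$, there is a measurable selection $\tilde g$ with $\tilde g(a) \in \mathcal Q_a^{\ast} \subseteq \mathcal Q_a$ a.e.\ and $\int \tilde g\,d\mu = \int g\,d\mu$. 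Because $p\cdot\tilde g(a) = M_a = p\cdot g(a)$ a.e., the budget set in (\ref{CP-Prob}) is unchanged, so $(f(a),\tilde g(a))$ still solves the consumer--producer problem, $(f,\tilde g)$ remains feasible, and $\tilde g(a) \in \mathcal Q_a$ a.e.; that is, $(f,\tilde g) \in \mathcal W^S(\mathbb E)$ with the same consumption assignment $f$. When the income maximiser is unique a.e.\ (in particular under the strict form SIRSpec, or for generic $p$) one has $\tilde g = g$ a.e.\ and $(f,g)$ itself lies in $\mathcal W^S(\mathbb E)$; combined with Theorem~\ref{thm:Blocking} this yields $\mathcal C^S(\mathbb E) = \mathcal C(\mathbb E) = \mathcal W(\mathbb E) = \mathcal W^S(\mathbb E)$.

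I expect the only genuine obstacle to be this last step: reconciling the pointwise replacement $g(a) \mapsto \tilde g(a)$ with market feasibility $\int f\,d\mu = \int g\,d\mu$, since individual full specialisation plans generally produce different commodity vectors than $g(a)$. That is exactly what atomlessness of $(A,\Sigma,\mu)$ buys through Lyapunov convexity of the integral of a correspondence; the remaining verifications (measurability and integrable boundedness of $a \mapsto \mathcal Q_a^{\ast}$, and that optimality of $f(a)$ is preserved because its budget set does not move) are routine.
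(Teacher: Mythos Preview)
Your approach is exactly the one the paper sketches: it invokes Theorem~\ref{thm:Blocking} for the chain $\mathcal C^S(\mathbb E)=\mathcal C(\mathbb E)=\mathcal W(\mathbb E)$ and then Theorem~2.7 of \citet{Gilles2016} to close the loop with $\mathcal W^S(\mathbb E)$. The paper provides no further detail, so your write-up in fact supplies the argument the corollary presupposes.

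Your treatment is more careful than the paper's in one respect worth flagging. Under plain IRSpec the income-maximising production plan in $\mathcal P_a$ need not lie in $\mathcal Q_a$ (think of $\mathcal P_a=\mathrm{Conv}\,\mathcal Q_a-\mathbb R^{\ell}_+$ with a price at which two vertices tie), so the literal set equality $\mathcal W(\mathbb E)=\mathcal W^S(\mathbb E)$ can fail at the level of pairs $(f,g)$. You handle this correctly: you show $g(a)\in\mathrm{Conv}\,\mathcal Q_a^{\ast}$ a.e., then use atomlessness and Lyapunov convexity to pass to a selection $\tilde g$ in $\mathcal Q_a^{\ast}$ with the same integral, obtaining $(f,\tilde g)\in\mathcal W^S(\mathbb E)$ with the \emph{same} consumption assignment and supporting price. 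This is the right reading of the corollary (``can be supported as a structured equilibrium''), and you are explicit that pointwise equality $\tilde g=g$ follows only under uniqueness (e.g.\ SIRSpec or generic $p$). The remaining technical checks you list---measurability of $a\mapsto\mathcal Q_a^{\ast}$ via the argmax correspondence, integrable boundedness inherited from $\mathcal P$, and preservation of optimality of $f(a)$ because the budget set is unchanged---are indeed routine in this framework and mirror the devices used in Appendix~\ref{App-Blocking}. Note also that, as you implicitly assume, the corollary requires the atomless setting of Theorem~\ref{thm:Blocking}; the paper's statement omits the word ``continuum'' but the conclusion does not stand without it.
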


\newpage
\singlespace
\bibliographystyle{econometrica}
\bibliography{RPDB}

\newpage
\appendix

\section*{Appendices: Proofs of the main theorems}

\section{Proof of Theorem \ref{thm:VindSchmeidler}} \label{App-VindSchmeidler}

Consider a continuum economy $\, \mathbb{E} = \langle \, (A, \Sigma ,\mu ), u, \mathcal{P} \rangle$ such that $(A, \Sigma ,\mu )$ is atomless. Let $(f,g)$ be some non-Core allocation in $\mathbb{E}$. Hence, there exists some $S \in \Sigma$ with $\mu (S)>0$ and a coalitional allocation $(f',g')$ with $f' \colon S \to \mathbb{R}^{\ell}_+$ and $g' \colon S \to \mathbb{R}^{\ell}$ such that
\begin{numm}
\item $u_a(f'(a)) > u_a(f(a)) \equiv \bar{u}_a$ for every $a \in S$;

\item $g'(a) \in \mathcal{P}_a$ for every $a \in S$, and;

\item $\int_S f' \, d\mu = \int_S g' \, d\mu$.
\end{numm}
Let $0< \delta <1$. We now consider three cases:

\subsubsection*{\maltese \ \ $\mu(S) = \delta$}

In this case the assertion of Theorem \ref{thm:VindSchmeidler} is satisfied for $S$ itself.

\subsubsection*{\maltese \ \ $\mu(S) > \delta$}

A proof can easily be constructed on the argument introduced in \citet{Schmeidler1972}. Indeed, introduce a multi-dimensional measure $\nu \colon \Sigma \to \mathbb{R}^{\ell+1}_+$ on $(A, \Sigma ,\mu )$ restricted to $S$ by
\begin{equation}
\nu (T) = \left( \int_T (f'-g')\, d\mu \, , \, \mu (T) \, \right) \, \in \mathbb{R}^{\ell+1}_+ \quad \mbox{for any } T \subset S
\end{equation}
Now by Liapunov's Convexity Theorem, e.g., Theorem 3 in \citet[page 62]{Hildenbrand1974}, it follows that $\nu$ results in a convex image.
\\
Obviously $\nu (\varnothing )= (0, \ldots ,0,0)$ and $\nu (S) = (0, \ldots ,0, \mu (S))$. Hence, there has to exist some $T \subset S$ such that $\nu (T) = (0, \ldots ,0, \delta )$. Clearly the coalition $T$ now improves upon $(f,g)$ through $(f',g')$ such that $\mu (T) = \delta$, showing the assertion.

\subsubsection*{\maltese \ \ $\mu(S) < \delta$}

We modify the proof seminally devised by \citet{Vind1972}. Let $0< \varepsilon <1$ be arbitrarily chosen.
\\
Since the restricted probability space $(S, \Sigma_S ,\mu )$ is atomless as well, we know by Liapunov's Convexity Theorem that $\int_S \mathcal{P}_a \, d\mu (a) \subset \mathbb{R}^{\ell}_+$ is a convex set. Hence, there exists an integrable selection $\hat{g} \colon S \to \mathbb{R}^{\ell}_+$ such that $\hat{g}(a) \in \mathcal{P}_a$ for $a \in S$ and
\begin{equation}
\label{eq:ProdVind}
\int_S \hat{g} \, d\mu = \varepsilon \int_S g' \, d\mu + (1- \varepsilon ) \int_S g \, d\mu
\end{equation}
Furthermore, the continuity (Axiom \ref{ax:basic}(i)) and measurability of all preferences implies that there exists $f'' \colon S \to \mathbb{R}^{\ell}_+$ with
\begin{gather*}
u_a \left( f'' (a) \, \right) > \bar{u}_a \quad \mbox{for every } a \in S \\
\int_S f'' \, d\mu = \int_S g' \, d\mu -y \quad \mbox{for some } y>0 .
\end{gather*}
Since the correspondence $\{ x \in \mathbb{R}^{\ell}_+ \mid u_a(x) > \bar{u}_a \}$ is measurable on $S$, by Liapunov's Convexity Theorem it follows that
\[
\int_S \left\{ \left. \, x \in \mathbb{R}^{\ell}_+ \, \right| \, u_a(x) > \bar{u}_a \, \right\} \, d\mu (a) \subset \mathbb{R}^{\ell}_+
\]
is a convex set. Hence, since $\int_S f'' \, d\mu$ is in this integral and $\int_S f \, d\mu$ is on its boundary, there exists some integrable selection $\hat{f} \colon S \to \mathbb{R}^{\ell}_+$ such that
\begin{gather}
u_a \left( \, \hat{f} (a) \, \right) > \bar{u}_a \quad \mbox{for every } a \in S \\
\int_S \hat{f} \, d\mu = \varepsilon \int_S f'' \, d\mu + (1- \varepsilon ) \int_S f \, d\mu
\end{gather}
Since $\mu$ is atomless, we can now take $T \subset A \setminus S$ with $T \in \Sigma$, $\mu (T) = (1- \varepsilon ) \mu (A \setminus S) >0$ and
\begin{equation}
\int_T (f-g) \, d\mu = (1- \varepsilon ) \int_{A \setminus S} (f-g)\, d\mu .
\end{equation}
Now consider the coalition $S \cup T \in \Sigma$ and define the allocation $\left( \, \tilde{f}, \tilde{g} \, \right)$ by
\[
\tilde{f} (a) = \left\{
\begin{array}{ll}
\hat{f} (a) & a \in S \\
f(a) + \frac{\varepsilon}{\mu (T)} y & a \in T
\end{array}
\right.
\quad \mbox{and} \quad  \tilde{g} (a) = \left\{
\begin{array}{ll}
\hat{g} (a) & a \in S \\
g(a) & a \in T
\end{array}
\right. .
\]
Now it follows easily from the above that for $a\in S$
\[
u_a \left( \tilde{f} (a) \right) = u_a \left( \hat{f} (a) \right) > \bar{u}_a
\]
and, due to strict monotonicity of $u_a$, for every $a \in T$
\[
u_a \left( \tilde{f} (a) \right) = u_a \left( f (a) + \tfrac{\varepsilon}{\mu (T)} y \right) > \bar{u}_a .
\]
Moreover, we establish that
\begin{align*}
\int_{S \cup T} \tilde{f} \, d\mu & = \int_S \hat{f} \, d\mu + \int_T f \, d\mu +\varepsilon y = \varepsilon \int_S f'' \, d\mu + (1- \varepsilon ) \int_S f\, d\mu + \int_T f \, d\mu +\varepsilon y = \\
& = \varepsilon \int_S g' \, d\mu - \varepsilon y + (1- \varepsilon ) \int_S f\, d\mu + \int_T f \, d\mu +\varepsilon y = \\
& = \varepsilon \int_S g' \, d\mu + (1- \varepsilon ) \int_S f\, d\mu + \int_T (f-g) \, d\mu + \int_T g \, d\mu = \\
& = \varepsilon \int_S g' \, d\mu + (1- \varepsilon ) \int_S f\, d\mu + (1- \varepsilon ) \int_{A \setminus S} (f-g) \, d\mu + \int_T g \, d\mu = \\
& = \varepsilon \int_S g' \, d\mu + (1- \varepsilon ) \int g\, d\mu - (1- \varepsilon ) \int_{A \setminus S} g \, d\mu + \int_T g \, d\mu = \\
& = \varepsilon \int_S g' \, d\mu + (1- \varepsilon ) \int_S g\, d\mu + \int_T g \, d\mu = \int_S \hat{g} \, d\mu + \int_T g \, d\mu = \int_{S \cup T} \tilde{g} \, d\mu
\end{align*}
This shows that $S \cup T$ can improve upon $(f,g)$ through $\left( \tilde{f} , \tilde{g} \right)$. Now select
\begin{equation}
\varepsilon = \frac{1 - \delta}{1- \mu (S)}
\end{equation}
then it is easily established that $0< \varepsilon <1$---since $0< \mu (S) < \delta <1$---and that
\[
\mu (S \cup T) = \mu(S) + (1- \varepsilon ) \mu (A \setminus S) = \delta .
\]
This completes the proof of the assertion.

\section{Proof of Theorem \ref{thm:CE}} \label{App-D}

\subsection{Proof of Theorem \ref{thm:CE}(a)}

Suppose that $(f,g)$ is a competitive equilibrium in the economy $\, \mathbb{E} = \langle \, (A, \Sigma ,\mu ), u, \mathcal{P} \rangle$ for the equilibrium price $p \in \mathbb{R}^{\ell}_+$ with $p \gg 0$.
\\
Suppose to the contrary that $(f,g)$ is \emph{not} a Core allocation in $\mathbb{E}$. Then there exists some non-negligible coalition $S \in \Sigma$ with $\mu (S)>0$ and a pair $(f',g')$ such that $g'(a) \in \mathcal{P}_a$ for all $a \in S$, $u_a (f'(a)) > u_a (f(a))$ for all $a \in S$ and $\int_S f' \, d\mu \leqslant \int_S g' \, d\mu$.
\\
From the assumption that $u_a (f'(a)) > u_a (f(a))$ for all $a \in S$, it follows from the consumer-producer problem for all $a \in S$ that $p \cdot f'(a) > p \cdot g(a) \geqslant p \cdot g'(a)$ for almost all $a \in S$.\footnote{Here, note that $p \cdot g(a) = \max p \cdot \mathcal{P}_a$ is the maximal income that agent $a$ can generate in this economy under price vector $p \gg 0$. This follows from the fact that $(f(a),g(a))$ is a solution to $a$'s consumer-producer problem. See also \citet[Theorem 2.4]{Gilles2016}. } Hence,
\[
p \cdot \int_S f' \, d\mu = \int_S p \cdot f'(a) \, d\mu (a) > \int_S p \cdot g'(a) \, d\mu (a) = p \cdot \int_S g' \, d\mu .
\]
This contradicts the assumed property that $\int_S f' \, d\mu \leqslant \int_S g' \, d\mu$. Hence, we have shown the assertion.

\subsection{Proof of Theorem \ref{thm:CE}(b)}

The proof of this assertion is a modification of the standard methodology introduced by \citet{Aumann1964} and elaborated in \citet[Section 2.1]{Hildenbrand1974}.
\\
Let $(f,g)$ be a Core allocation in the continuum economy $\mathbb{E}$. For every agent $a \in A$ we define the consumption bundles that are preferred to $f(a)$ by
\begin{equation}
\mathcal{U} (a) = \left\{ \left. \, x \in \mathbb{R}^{\ell}_+ \, \right| \, u_a (x) > u_a \left( f(a) \right) \, \right\}
\end{equation}
Also, define for $a \in A \colon$
\begin{equation}
\mathcal{Z} (a) = \left\{ \, \mathcal{U}(a) - \mathcal{P}_a \, \right\} \cup \{ 0 \} \subset \mathbb{R}^{\ell}
\end{equation}
Then $\mathcal{U} \colon A \twoheadrightarrow \mathbb{R}^{\ell}_+$ is a measurable correspondence and as a consequence, $\mathcal{Z} \colon A \twoheadrightarrow \mathbb{R}^{\ell}$ is a measurable correspondence as well. Now by standard arguments \citep[Theorem 3, page 62]{Hildenbrand1974}, $\int \mathcal{Z} \, d\mu$ is convex and non-empty. (In particular, $0 \in \int \mathcal{Z} \, d\mu$.)
\\
We explore the properties of the correspondence $\mathcal Z$ through a sequence of lemmas.
\begin{lemma} \label{Claim I}
$\int \mathcal{Z} \, d\mu \cap \mathbb{R}^{\ell}_- = \{ 0 \}$.
\end{lemma}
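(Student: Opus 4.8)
The plan is to show the standard separation fact that in a Core allocation, the aggregate "improvement set" $\int\mathcal Z\,d\mu$ cannot poke into the negative orthant except at the origin. I would argue by contradiction: suppose there is some $v\in\int\mathcal Z\,d\mu\cap\mathbb R^\ell_-$ with $v\neq 0$, i.e.\ $v\leqslant 0$ and $v_k<0$ for at least one commodity $k$. By definition of the integral of a correspondence, $v=\int_A h\,d\mu$ for some integrable selection $h(a)\in\mathcal Z(a)=\left(\mathcal U(a)-\mathcal P_a\right)\cup\{0\}$.

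First I would isolate the coalition that actually does the blocking. Let $S=\{a\in A\mid h(a)\in\mathcal U(a)-\mathcal P_a\}$, the set of agents on which the selection is not the trivial point $0$. Since the contributions from $A\setminus S$ are zero, $\int_S h\,d\mu=v<0$; in particular $\mu(S)>0$, because an integral over a null set vanishes and $v\neq 0$. For every $a\in S$ write $h(a)=x(a)-y(a)$ with $x(a)\in\mathcal U(a)$ (so $u_a(x(a))>u_a(f(a))$) and $y(a)\in\mathcal P_a$; these selections can be taken measurable on $S$ by a standard measurable-selection argument (the graph of $\mathcal U-\mathcal P$ is measurable). Then $\int_S x\,d\mu-\int_S y\,d\mu=v\leqslant 0$, i.e.\ $\int_S x\,d\mu\leqslant\int_S y\,d\mu$. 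Setting $f'=x$ and $g'=y$ on $S$, the pair $(f',g')$ satisfies exactly the two conditions in the definition of "$S$ is able to improve upon $(f,g)$": strict utility gain for a.e.\ $a\in S$ and the resource inequality $\int_S f'\,d\mu\leqslant\int_S g'\,d\mu$. Hence $(f,g)$ is blocked by $S$, contradicting that it is a Core allocation. Therefore $\int\mathcal Z\,d\mu\cap\mathbb R^\ell_-=\{0\}$; the inclusion $\{0\}\subset\int\mathcal Z\,d\mu\cap\mathbb R^\ell_-$ is already noted in the text.

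The main obstacle is the measurability bookkeeping: decomposing a measurable selection $h$ of the correspondence $a\mapsto\left(\mathcal U(a)-\mathcal P_a\right)\cup\{0\}$ into a measurable "consumption part" $x$ and "production part" $y$ on the set $S$ where $h\neq 0$. This is handled by the Aumann measurable-selection theorem applied to the measurable correspondence $a\mapsto\{(x,y)\in\mathcal U(a)\times\mathcal P_a\mid x-y=h(a)\}$ on $S$, which has measurable graph because $\mathcal U$ and $\mathcal P$ do; one should also note $\mu(S)>0$ follows since the integrand is $0$ off $S$ while $v\neq 0$. Everything else — that a strict-preference set is nonempty when needed, that the inequality $v\leqslant 0$ transfers verbatim to $\int_S f'\,d\mu\leqslant\int_S g'\,d\mu$ — is immediate, and no use of Liapunov convexity or of strict monotonicity is needed for this particular lemma; convexity of $\int\mathcal Z\,d\mu$ (already invoked) will only matter in the separation step of the subsequent lemmas.
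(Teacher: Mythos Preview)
Your argument is correct and follows essentially the same contradiction scheme as the paper: take an integrable selection $h$ of $\mathcal Z$ with $\int h\,d\mu<0$, define the blocking coalition $S$ as the set where $h$ is nontrivial, decompose $h=x-y$ with $x\in\mathcal U$ and $y\in\mathcal P$ on $S$, and conclude that $S$ improves upon $(f,g)$.

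The one noteworthy difference is that the paper, after obtaining $f'\in\mathcal U$ and $g'\in\mathcal P$ with $\int_S(f'-g')\,d\mu<0$, does not stop there: it defines $f''(a)=f'(a)-\tfrac{1}{\mu(S)}\int h\,d\mu$, invokes \emph{strict monotonicity} to get $u_a(f''(a))>u_a(f'(a))>u_a(f(a))$, and obtains the resource constraint with equality, $\int_S f''\,d\mu=\int_S g'\,d\mu$. You instead use $(f',g')=(x,y)$ directly and rely on the fact that the paper's definition of ``improve upon'' only requires the weak inequality $\int_S f'\,d\mu\leqslant\int_S g'\,d\mu$. Your observation that strict monotonicity is not needed for this lemma is therefore accurate; the paper's extra step is harmless but redundant given its own blocking definition.
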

\begin{proof}
Suppose to the contrary that the claim does not hold. Then there exists an integrable selection $h \colon A \to \mathbb{R}^{\ell}$ such that $h(a) \in \mathcal{Z}(a)$ for almost all $a \in A$ and $\int h\, d\mu <0$.
\\
Next, let $S = \{ a \in A \mid h(a) \neq 0 \} \in \Sigma$. Since $\int h \, d\mu <0$ it holds that $\mu (S) >0$.
\\
Therefore, there exist integrable selections $f'$ in $\mathcal{U}$ and $g'$ in $\mathcal{P}$ such that $h = f'-g'$ with $h(a) =0$ if $a \notin S$. For every agent $a \in S$ we let
\begin{equation}
f''(a) = f' (a) - \frac{1}{\mu (S)} \int h \, d\mu \in \mathbb{R}^{\ell}_+
\end{equation}
In particular, since $\int h \, d\mu <0$, it holds that $f''(a) > f'(a)$ for every $a \in S$ and, therefore, by strict monotonicity of $a$'s preferences we conclude that
\[
u_a \left( f''(a) \right) > u_a \left( f'(a) \right) > u_a \left( f(a) \right)
\]
Also, we can easily establish that by definition
\begin{align*}
\int_S f'' \, d\mu & = \int_S f' \, d\mu - \int h \, d\mu = \\
& = \int_S f' \, d\mu - \int_S h \, d\mu = \int_S g' \, d\mu .
\end{align*}
Thus, the coalition $S$ is able to improve upon $(f,g)$ through the coalitional allocation $(f'',g')$. This contradicts the hypothesis that $(f,g)$ is a Core allocation. Thus, we have shown the assertion of Lemma \ref{Claim I}.
\end{proof}

\bigskip\noindent
Now by Lemma \ref{Claim I} we may apply Minkowski's Separation Theorem \citep[page 38]{Hildenbrand1974} to $\int \mathcal{Z} \, d\mu$ and $\mathbb{R}^{\ell}_-$. Therefore, there exists a normal vector $p >0$ such that $p \cdot \int \mathcal{Z} \, d\mu \geqslant 0$.
\\
Hence, $\inf p \cdot \mathcal{Z} (a) \geqslant 0$ for almost every $a \in A$. In particular, since $0 \in \mathcal{Z} (a)$, it follows that $\inf p \cdot \mathcal{Z}(a) =0$.
\\
So, we conclude now that for every $x \in \mathbb{R}^{\ell}_+ \colon$
\begin{equation}
\label{eq:ster}
u_a (x) > u_a(f(a)) \quad \mbox{implies} \quad p \cdot x \geqslant I(a,p) \equiv \sup p \cdot \mathcal{P}_a
\end{equation}
This now leads to the following claims:
\begin{lemma} \label{Claim II}
For almost every $a \in A \colon p \cdot f(a) = I(a,p) \geqslant 0$.
\end{lemma}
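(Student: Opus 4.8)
The plan is to show the two inequalities $p \cdot f(a) \geqslant I(a,p)$ and $p \cdot f(a) \leqslant I(a,p)$ separately for almost every agent, and then to note that $I(a,p) \geqslant 0$ follows from $0 \in \mathcal{P}_a$ (Axiom \ref{ax:basic}(ii)), so that $I(a,p) = \sup p \cdot \mathcal{P}_a \geqslant p \cdot 0 = 0$. The second inequality $p \cdot f(a) \leqslant I(a,p)$ is the easy direction: since $(f,g)$ is feasible with $\int f \, d\mu \leqslant \int g\, d\mu$ and $g$ is a selection of $\mathcal{P}$, one has $p \cdot f(a) \leqslant$ something only in aggregate, so instead I would argue pointwise that $f(a) \in \mathbb{R}^\ell_+$ together with the separation inequality cannot directly give it — rather, I would use feasibility at the aggregate level: $p \cdot \int f\, d\mu \leqslant p \cdot \int g\, d\mu \leqslant \int I(a,p)\, d\mu$, combined with the reverse pointwise inequality obtained below, to force equality almost everywhere.

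For the first inequality, fix an agent $a$ and take a sequence $x_n \in \mathbb{R}^\ell_+$ with $x_n \to f(a)$ and $x_n \gg f(a)$ (for instance $x_n = f(a) + \tfrac1n e$); by strict monotonicity of $u_a$ we have $u_a(x_n) > u_a(f(a))$, so by (\ref{eq:ster}) we get $p \cdot x_n \geqslant I(a,p)$ for each $n$. Letting $n \to \infty$ and using continuity of the inner product yields $p \cdot f(a) \geqslant I(a,p)$. This holds for almost every $a$ — namely on the full-measure set where (\ref{eq:ster}) is valid. Integrating, $p \cdot \int f\, d\mu = \int p \cdot f(a)\, d\mu (a) \geqslant \int I(a,p)\, d\mu(a)$, where measurability and integrability of $a \mapsto I(a,p)$ follow from the integrable boundedness of the correspondence $\mathcal{P}$ in Definition \ref{def:economy}(iii).

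Now combine: feasibility gives $p \cdot \int f\, d\mu \leqslant p \cdot \int g\, d\mu$, and since $g(a) \in \mathcal{P}_a$ we have $p \cdot g(a) \leqslant I(a,p)$ for almost every $a$, hence $p \cdot \int g\, d\mu \leqslant \int I(a,p)\, d\mu$. Chaining these, $\int I(a,p)\, d\mu \leqslant p \cdot \int f\, d\mu \leqslant p \cdot \int g\, d\mu \leqslant \int I(a,p)\, d\mu$, so all inequalities are equalities. In particular $\int \left( p \cdot f(a) - I(a,p) \right) d\mu(a) = 0$ while the integrand is nonnegative almost everywhere, forcing $p \cdot f(a) = I(a,p)$ for almost every $a$. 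Finally $I(a,p) \geqslant p \cdot 0 = 0$ since $0 \in \mathcal{P}_a$, completing the proof. The main obstacle is the bookkeeping around measurability and integrability of $a \mapsto I(a,p) = \sup p \cdot \mathcal{P}_a$ — one must invoke the measurable-graph and integrable-boundedness hypotheses to legitimise the integration steps — but conceptually this is routine given the setup already in place.
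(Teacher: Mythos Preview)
Your proof is correct and follows essentially the same route as the paper: a continuity/limit argument to obtain $p \cdot f(a) \geqslant I(a,p)$ pointwise, then an aggregate feasibility comparison to force equality almost everywhere. The only cosmetic differences are that the paper phrases the second step as a contradiction rather than a sandwich of integrals, and that the paper invokes Axiom~\ref{ax:basic}(iii) for $I(a,p) \geqslant 0$ whereas your use of $0 \in \mathcal{P}_a$ from Axiom~\ref{ax:basic}(ii) is in fact the more direct justification.
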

\begin{proof}
Let $a \in A$. By continuity of the utility function $u_a$ it follows from (\ref{eq:ster}) that $p \cdot f(a) \geqslant I(a,p)$. Now suppose that the assertion of Lemma \ref{Claim II} is not true, then there exists a coalition $S$ with $\mu (S)>0$ and for all $a \in S \colon p \cdot f(a) > I(a,p)$. Then
\[
p \cdot \int f \, d\mu > \int I(\cdot ,p) \, d\mu \geqslant p \cdot \int g \, d\mu .
\]
Since $p>0$, this property contradicts that $(f,g)$ is feasible. Hence, $p \cdot f(a) = I(a,p)$.
\\
Finally, by Axiom \ref{ax:basic}(iii), for $a \in A$ there exists some strictly positive vector $z_a  > 0$ in the convex hull of the production set $\mathcal P_a$, i.e., $z_a \in \mathrm{Conv} \, \mathcal P_a$. Hence, since $p>0$, we conclude that $I(a,p) = \sup p \cdot \mathcal{P}_a \geqslant p \cdot z_a \geqslant 0$, leading to the conclusion that Lemma \ref{Claim II} holds.
\end{proof}

\begin{lemma} \label{Claim III}
For almost every $a \in A \colon p \cdot g(a) = I(a,p)$.
\end{lemma}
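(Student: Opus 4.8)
The plan is to prove both inequalities $p\cdot g(a)\le I(a,p)$ and $p\cdot g(a)\ge I(a,p)$ for almost every $a\in A$. The first is immediate from the fact that $g$ is a selection of $\mathcal P$; the second comes from an aggregation argument that forces the pointwise surplus $p\cdot f(a)-p\cdot g(a)$, already known to be nonnegative, to integrate to zero.

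First I would record the easy direction. Since $g$ is an integrable selection of the production correspondence $\mathcal P$, we have $g(a)\in\mathcal P_a$ for almost every $a$, and therefore $p\cdot g(a)\le\sup p\cdot\mathcal P_a=I(a,p)$ almost everywhere. Combining this with Lemma~\ref{Claim II}, which identifies $p\cdot f(a)=I(a,p)$ almost everywhere, yields $p\cdot f(a)-p\cdot g(a)\ge 0$ for almost every $a$. (Integrability of $I(\cdot,p)$ is not a genuine issue here: by Lemma~\ref{Claim II} it equals the integrable function $p\cdot f(\cdot)$, and one may also bound it by $p\cdot\overline g(\cdot)$ using the integral upper bound on $\mathcal P$ from Definition~\ref{def:economy}(iii).)

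Next I would integrate the pointwise inequality over $A$ to get $p\cdot\int f\,d\mu\ge p\cdot\int g\,d\mu$. On the other hand, feasibility of the Core allocation $(f,g)$ gives $\int f\,d\mu\le\int g\,d\mu$ componentwise, and since $p>0$ this delivers the reverse inequality $p\cdot\int f\,d\mu\le p\cdot\int g\,d\mu$. Hence $\int_A\big(p\cdot f(a)-p\cdot g(a)\big)\,d\mu(a)=0$. A nonnegative integrand with vanishing integral is zero almost everywhere, so $p\cdot g(a)=p\cdot f(a)=I(a,p)$ for almost every $a$, which is exactly the assertion of Lemma~\ref{Claim III}.

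I do not expect a real obstacle: once Lemma~\ref{Claim II} and the selection property of $g$ are available, the argument is just a short aggregation step. The only point needing a moment's care is that feasibility enters through $p>0$ (a nonzero, nonnegative price vector) applied to a componentwise inequality between the aggregate consumption and production integrals, precisely as in the proof of Lemma~\ref{Claim II}.
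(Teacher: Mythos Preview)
Your proof is correct and follows essentially the same approach as the paper: both combine the trivial bound $p\cdot g(a)\le I(a,p)$ with Lemma~\ref{Claim II} and feasibility to force equality almost everywhere. The only cosmetic difference is that the paper phrases the aggregation step as a contradiction (a positive-measure set with strict inequality would yield $p\cdot\int f\,d\mu>p\cdot\int g\,d\mu$), whereas you argue directly that the nonnegative integrand $p\cdot f-p\cdot g$ has zero integral.
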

\begin{proof}
Since $g (a) \in \mathcal{P}_a$ it follows that $p \cdot g(a) \leqslant I(a,p)$. Suppose that the assertion of Lemma \ref{Claim III} is not true, then there exists a coalition $S$ with $\mu (S)>0$ and $p \cdot g(a) < I(a,p)$. Since by Lemma \ref{Claim II}, $p \cdot f(a) = I(a,p)$ for almost all $a \in A$, we conclude that
\[
p \cdot \int f \, d\mu = \int I(a,p) \, d\mu (a) > p \cdot \int g \, d\mu .
\]
Hence, by $p > 0$, $\int f \, d\mu > \int g \, d\mu$, which contradicts the feasibility of $(f,g)$. Thus we have shown the validity of Lemma \ref{Claim III}.
\end{proof}

\begin{lemma} \label{Claim IV}
For almost every $a \in A \colon f(a)$ is $u_a$-maximal in the budget set
\[
B(a,p) = \left\{ \left. x \in \mathbb{R}^{\ell}_+ \, \right| \, p \cdot x \leqslant I(a,p) \, \right\} \neq \varnothing .
\]
\end{lemma}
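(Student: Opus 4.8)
The plan is to read off budget-set maximality directly from the separation inequality (\ref{eq:ster}) together with Lemmas \ref{Claim II} and \ref{Claim III}, and then to close the familiar quasi-equilibrium gap with a cheaper-point argument. First I would observe that $B(a,p)\neq\varnothing$ is immediate, since $0\in\mathbb{R}^\ell_+$ and $p\cdot 0=0\leqslant I(a,p)$ by Lemma \ref{Claim II}; and that $f(a)\in B(a,p)$, since Lemma \ref{Claim II} gives $p\cdot f(a)=I(a,p)$. So everything reduces to showing that, for almost every $a$, no bundle in $B(a,p)$ is strictly preferred to $f(a)$.

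Next, fix an agent $a$ in the full-measure set on which Lemmas \ref{Claim II} and \ref{Claim III} both hold, and suppose toward a contradiction that there is some $\bar{x}\in B(a,p)$ with $u_a(\bar{x})>u_a(f(a))$. Then (\ref{eq:ster}) forces $p\cdot\bar{x}\geqslant I(a,p)$, while membership in $B(a,p)$ gives $p\cdot\bar{x}\leqslant I(a,p)$, so $p\cdot\bar{x}=I(a,p)=p\cdot f(a)$. The decisive step is the cheaper-point argument: if $I(a,p)>0$, then for $\lambda\in(0,1)$ the bundle $\lambda\bar{x}$ satisfies $p\cdot(\lambda\bar{x})=\lambda I(a,p)<I(a,p)$, while by continuity of $u_a$ (Axiom \ref{ax:basic}(i)) we still have $u_a(\lambda\bar{x})>u_a(f(a))$ once $\lambda$ is close enough to $1$ --- contradicting (\ref{eq:ster}). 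Hence $f(a)$ is $u_a$-maximal in $B(a,p)$ for every such $a$ with $I(a,p)>0$.

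The only remaining possibility is the set $N=\{a\in A\mid I(a,p)=0\}$ of minimal-income agents, and this is the main obstacle: it is precisely the point at which a merely positive separating price produces a quasi-equilibrium rather than a competitive one, and where strict monotonicity of preferences must be used. On $N$ we have $p\cdot f(a)=0$, so $B(a,p)$ lies in the face $\{x\geqslant 0\mid p\cdot x=0\}$ and the cheaper-point trick is unavailable. I would dispose of $N$ by showing $\mu(N)=0$: by Axiom \ref{ax:basic}(iii) every agent has some $z_a>0$ with $z_a\in\mathrm{Conv}\,\mathcal{P}_a$, whence $I(a,p)=\sup p\cdot\mathcal{P}_a\geqslant p\cdot z_a$, so once the separating price has been upgraded to $p\gg 0$ --- which I would establish just before invoking this lemma, using strict monotonicity together with feasibility of $(f,g)$ and the integrable upper bound on $\mathcal{P}$ --- one gets $p\cdot z_a>0$ and hence $N=\varnothing$. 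Absent that upgrade, $N$ is instead shown null by a direct blocking argument applied to a hypothetical non-negligible subset of $N$ on which maximality fails, again leaning on strict monotonicity and on feasibility to contradict that $(f,g)$ is a Core allocation. Either route yields the assertion for almost every $a$.
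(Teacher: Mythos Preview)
Your argument is correct and coincides with the paper's: both derive maximality directly from (\ref{eq:ster}), first ruling out strictly cheaper preferred bundles by contrapositive, then handling bundles on the budget hyperplane via continuity --- you by scaling $\lambda\bar{x}$ toward $\bar{x}$, the paper by choosing an approximating sequence $x^n\to x$ with $p\cdot x^n<I(a,p)$.

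The one difference is organisational. The paper's proof of this lemma simply asserts the existence of the cheaper approximants and does not isolate the zero-income case $I(a,p)=0$ at all; the fact that $I(a,p)>0$ (and $p\gg 0$) is deferred to the subsequent Lemma~\ref{Claim V}, whose proof in turn invokes the present lemma. Your explicit treatment of the set $N$ is therefore more scrupulous than the paper's, but note that your preferred fix --- upgrading to $p\gg 0$ \emph{before} proving this lemma --- reverses the paper's ordering, and the paper's own route to $p\gg 0$ uses Lemma~\ref{Claim IV}. If you want to match the paper you would leave the zero-income issue to Lemma~\ref{Claim V}; if you want a self-contained argument, your alternative of disposing of $N$ directly is the cleaner option.
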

\begin{proof}
Suppose $0 \leqslant p \cdot x < I(a,p)$ for $x \in \mathbb{R}^{\ell}_+$, then, by (\ref{eq:ster}), $u_a (x) \leqslant u_a (f(a))$.
\\
Suppose $p \cdot x = I(a,p)$ for $x \in \mathbb{R}^{\ell}_+$, then there exists a sequence $x^n \to x$ with $p \cdot x^n < I(a,p)$. Thus, we conclude that by continuity of $u_a$, $u_a(x) \leqslant u_a(f(a))$ as well.
\end{proof}

\begin{lemma} \label{Claim V}
	From strict monotonicity of all agents' preferences, it follows that $p \gg 0$ and $I(a,p) >0$ for almost every agent $a \in A$.
\end{lemma}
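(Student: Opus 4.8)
The plan is to argue by contradiction, using Lemma \ref{Claim IV} together with strict monotonicity to rule out a zero price for any good. First I would record the starting data: by Lemma \ref{Claim II} we have $p \cdot f(a) = I(a,p) \geqslant 0$ for almost every $a$, so $0 \in B(a,p)$ and hence $B(a,p) \neq \varnothing$, and by Lemma \ref{Claim IV} the bundle $f(a)$ is $u_a$-maximal in $B(a,p)$ for almost every $a$. Intersecting these two full-measure sets with the set where strict monotonicity applies (a finite, hence harmless, intersection) gives a full-measure set $A^{\ast}$ of agents for which all three properties hold simultaneously.

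Next I would suppose, toward a contradiction, that $p \gg 0$ fails. Since $p > 0$ there is a coordinate $k \in \{ 1, \ldots , \ell \}$ with $p_k = 0$. Fix any $a \in A^{\ast}$ and set $x = f(a) + e_k$. Then $p \cdot x = p \cdot f(a) + p_k = p \cdot f(a) = I(a,p)$, so $x \in B(a,p)$; but $x > f(a)$, so strict monotonicity of $u_a$ yields $u_a(x) > u_a(f(a))$, contradicting the $u_a$-maximality of $f(a)$ in $B(a,p)$. Since $a$ was an arbitrary element of a positive-measure set, this contradiction establishes $p_k > 0$ for every $k$, i.e.\ $p \gg 0$.

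For the second assertion I would reuse the argument already given in the proof of Lemma \ref{Claim II}: by Axiom \ref{ax:basic}(iii), for each $a$ there is some $z_a > 0$ with $z_a \in \mathrm{Conv} \, \mathcal P_a$. Linearity of $p \cdot (\,\cdot\,)$ gives $I(a,p) = \sup p \cdot \mathcal P_a = \sup p \cdot \mathrm{Conv} \, \mathcal P_a \geqslant p \cdot z_a$, and $p \cdot z_a > 0$ because $p \gg 0$ and $z_a > 0$. Hence $I(a,p) > 0$ for almost every $a \in A$, completing the proof. I do not anticipate a real obstacle here; the only point requiring a word of care is the measure-theoretic bookkeeping (ensuring the ``almost every'' qualifiers from Lemmas \ref{Claim II} and \ref{Claim IV} and the monotonicity hypothesis are combined on a set of full measure), but this is routine since a finite intersection of full-measure sets has full measure. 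The step I would flag as the substantive one is the contradiction argument of the second paragraph, which is the standard ``no free good'' device adapted to the consumer-producer budget constraint $p \cdot x \leqslant I(a,p)$.
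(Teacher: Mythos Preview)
Your proposal is correct and follows essentially the same argument as the paper: the contradiction via $f(a)+e_k$ against Lemma~\ref{Claim IV} to obtain $p\gg 0$, and the appeal to Axiom~\ref{ax:basic}(iii) together with $p\gg 0$ to obtain $I(a,p)>0$. Your extra measure-theoretic bookkeeping and the explicit identity $\sup p\cdot\mathcal P_a=\sup p\cdot\mathrm{Conv}\,\mathcal P_a$ are harmless elaborations of the same route.
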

\begin{proof}
	If $p >0$ is such that at least one good $k \in \{ 1, \ldots ' \ell \}$ has a zero price $p_k=0$, it follows that $f'(a) = f(a) + e_k \in B(a,p)$ for almost all agents $a \in A$. By strict monotonicity of $u_a$ it then follows that $u_a(f'(a)) > u_a (f(a))$, which contradicts Lemma \ref{Claim IV}. This implies that $p \gg 0$.
	\\
	Furthermore, by Axiom \ref{ax:basic}(iii), for every $a \in A$ there exists some positive vector $z_a  > 0$ in the convex hull of the production set $\mathcal P_a$, i.e., $z_a \in \mathrm{Conv} \, \mathcal P_a$. Hence, since $p \gg 0$, we conclude that $I(a,p) = \sup p \cdot \mathcal{P}_a \geqslant p \cdot z_a > 0$. This proves Lemma \ref{Claim V}.
\end{proof}

\bigskip\noindent
The property that $p \gg 0$ (Lemma \ref{Claim V}) together with the assertions stated in Lemmas \ref{Claim III} and \ref{Claim IV} imply that the allocation $(f,g)$ is indeed a competitive equilibrium supported by the price vector $p$. This proves the assertion of Theorem \ref{thm:CE}(b).

\section{Proof of Theorem \ref{thm:Blocking}} \label{App-Blocking}

Let $\, \mathbb{E} = \langle \, (A, \Sigma ,\mu ), u, \mathcal{P} \rangle$ be some continuum economy, in which $(A, \Sigma ,\mu )$ is a complete atomless probability space for which the assumptions introduced in Axiom \ref{ax:basic} and Definition \ref{def:economy} hold and all utility functions $u_a$, $ a \in A$, are strictly monotone. As stated in the assertion, we also suppose that all agents in $\mathbb{E}$ have productive abilities that are subject to Increasing Returns to Specialisation (IRSpec), i.e., for almost every agent $a \in A \colon \mathcal Q (a) \subset \mathcal P_a \subset \mathrm{Conv} \, \mathcal Q (a) - \mathbb R^{\ell}_+$, where
\begin{equation}
\mathcal{Q} (a) = \left\{ Q^1 (a) e_1 -y^1 (a), \ldots ,Q^{\ell} (a) e_{\ell} -y^{\ell} (a) \right\} .
\end{equation}
for well-selected output levels $Q^k (a) >0$ and input vectors $y^k (a) \in \mathbb{R}^{\ell}_+$ for $k \in \{ 1, \ldots , \ell \}$.
\\
We introduce some auxiliary notation. In particular, we define for every $a \in A \colon$
\begin{equation}
	\overline{\mathcal Q} (a) = \mathrm{Conv} \, \mathcal Q (a)
\end{equation}
Now we can prove the following assertion:

\begin{lemma} \label{lem:Q-measurable}
	The correspondences $\mathcal Q \colon A \to 2^{\mathbb R^{\ell}}$ and $\overline{\mathcal Q} \colon A \to 2^{\mathbb R^{\ell}}$ have a measurable graph.
\end{lemma}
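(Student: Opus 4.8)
The plan is to establish measurability of the two correspondences $\mathcal{Q}$ and $\overline{\mathcal{Q}}$ by tracing everything back to the measurability assumptions already imposed in Definition \ref{def:economy}, in particular the measurable graph of the production correspondence $\mathcal{P}$ and the measurability of the utility function. The key observation is that each full specialisation production plan $z^k(a) = Q^k(a) e_k - y^k(a)$ is obtained as an appropriate selection from $\mathcal{P}_a$, and that $\mathcal{Q}(a)$ is the finite set of these $\ell$ selections; once we know each $a \mapsto z^k(a)$ is a measurable function, the graph of the finite-valued correspondence $\mathcal{Q}$ is measurable, and then the graph of $\overline{\mathcal{Q}}(a) = \mathrm{Conv}\,\mathcal{Q}(a)$ follows because the convex hull of $\ell$ points is the continuous image of the standard simplex under the map $(\lambda_1,\ldots,\lambda_\ell) \mapsto \sum_k \lambda_k z^k(a)$, which depends measurably on $a$ and continuously on $\lambda$.

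First I would make precise how the selections $z^k$ are chosen. For each commodity $k$, define the correspondence $a \mapsto \mathcal{P}_a \cap H_k$, where $H_k = \{ y \in \mathbb{R}^\ell \mid y_j \leqslant 0 \text{ for all } j \neq k \}$ is the closed cone of production plans that are ``fully specialised in $k$'' (nonpositive in every other coordinate); this has a measurable graph since it is the intersection of the measurable graph of $\mathcal{P}$ with a fixed closed set. The IRSpec hypothesis guarantees that this correspondence is almost-everywhere nonempty (it contains $z^k(a)$), and it is closed-valued because $\mathcal{P}_a$ is closed (Axiom \ref{ax:basic}(ii)) and $H_k$ is closed; moreover it is bounded above by the integrable bound on $\mathcal{P}$. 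Hence by the Aumann/von Neumann measurable selection theorem (e.g.\ the version in \citet[Chapter 1]{Hildenbrand1974}) there exists a measurable selection; to pin down \emph{the} plan $z^k$ figuring in $\mathcal{Q}_a$, one can either fix such a measurable selection once and for all (which is all the later proof needs), or, if a canonical choice is wanted, select the one maximising output $Q^k(a) = (\text{the }k\text{-th coordinate})$, using that $a \mapsto \sup\{ y_k \mid y \in \mathcal{P}_a \cap H_k\}$ is measurable and the argmax correspondence has measurable graph.

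With measurable functions $z^1,\ldots,z^\ell \colon A \to \mathbb{R}^\ell$ in hand, the graph of $\mathcal{Q}$ is $\{(a,y) \mid y = z^k(a) \text{ for some } k\} = \bigcup_{k=1}^\ell \mathrm{graph}(z^k)$, a finite union of measurable sets, hence measurable. For $\overline{\mathcal{Q}}$, write its graph as the image of $A \times \Delta$ (with $\Delta = \{\lambda \in \mathbb{R}^\ell_+ \mid \sum_k \lambda_k = 1\}$ the simplex) under the Carath\'eodory map $\Phi(a,\lambda) = (a, \sum_{k=1}^\ell \lambda_k z^k(a))$; since $\Phi$ is a Carath\'eodory function (measurable in $a$, continuous in $\lambda$) and $\Delta$ is compact, the projection argument — or directly the fact that $(a,y) \in \mathrm{graph}(\overline{\mathcal{Q}})$ iff $\exists \lambda \in \Delta$ with $y = \sum_k \lambda_k z^k(a)$, which is a projection of a measurable set in $A \times \mathbb{R}^\ell \times \Delta$ and hence analytic, and is in fact measurable because the quantifier ranges over a compact set and the section is closed — yields a measurable graph.

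The main obstacle is the first step: disentangling ``$z^k(a)$'' from a mere existence statement in Definition \ref{def:IRSpec} into an honest measurable selection, i.e.\ confirming that the correspondence $a \mapsto \mathcal{P}_a \cap H_k$ is measurable and nonempty-valued on a full-measure set so that a measurable-selection theorem applies; once that is secured, the passage to the finite-valued $\mathcal{Q}$ and then to its convex hull $\overline{\mathcal{Q}}$ is routine, relying only on standard facts that finite unions of measurable graphs are measurable and that Carath\'eodory-type images of compact parameter sets preserve measurability of graphs.
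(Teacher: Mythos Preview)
Your overall architecture matches the paper's: show each $a \mapsto z^k(a)$ is measurable, take the finite union for $\mathcal Q$, then pass to the convex hull for $\overline{\mathcal Q}$. The paper carries out the first step exactly via your ``canonical'' alternative: it characterises $z^k_a$ as the maximiser of $x \mapsto e_k \cdot x$ over $\mathcal P_a$ and invokes Proposition~3 in \citet[p.~60]{Hildenbrand1974} to obtain measurability of $a \mapsto z^k_a$; for the convex hull it simply cites the Corollary to that same Proposition rather than your explicit Carath\'eodory argument (which is fine and proves the same thing).

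The one place where your write-up needs tightening is the suggestion that an \emph{arbitrary} measurable selection from $a \mapsto \mathcal P_a \cap H_k$ ``is all the later proof needs.'' That is not correct: $0 \in \mathcal P_a \cap H_k$ for every $a$, so an arbitrary selection could return plans with output far below $Q^k(a)$, and then the resulting $\mathcal Q(a)$ need not satisfy $\mathcal P_a \subset \mathrm{Conv}\,\mathcal Q(a) - \mathbb R^{\ell}_+$, which is precisely the inclusion used in Lemma~\ref{lem:Q-delta}. You must pick the output-maximising selection (your second option), and that is exactly what the paper does. Once you commit to the argmax route, your proof and the paper's coincide.
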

\begin{proof}
	First, as assumed in Definition \ref{def:economy}, the correspondence $\mathcal P \colon A \to 2^{\mathbb R^{\ell}}$ has a measurable graph and is closed-valued. From the IRSpec property it follows that for every $a \in A$ and for every $k \in \{ 1, \ldots , \ell \}$ the maximisation problem
	\[
	\max \phi_k (x) = e_k \cdot x \quad \mbox{such that } x \in \mathcal P_a
	\]
	has a unique solution given by $\phi_k (z^a_k) = Q^k (a)$ with $z^a_k = Q^k (a) e_k - y^k (a) \in \mathcal Q (a)$.
	\\
	Proposition 3 in \citet[page 60]{Hildenbrand1974} now implies that $a \mapsto z^a_k$ is a measurable function on the complete probability space $(A, \Sigma , \mu )$. This, in turn, implies that
	\[
	a \mapsto \mathcal Q (a) = \left\{ Q^1 (a) e_1 -y^1 (a), \ldots ,Q^{\ell} (a) e_{\ell} -y^{\ell} (a) \right\} = \bigcup^{\ell}_{k=1} \left\{ z^a_k \right\}
	\]
	has a measurable graph on $(A, \Sigma , \mu )$.
	\\
	Finally, by the Corollary of Proposition 3 in \citet[page 60]{Hildenbrand1974}, this implies that the correspondence $\overline{\mathcal Q} \colon A \to 2^{\mathbb R^{\ell}}$ that assigns to every $a \in A$ the convex hull $\mathrm{Conv} \, \mathcal Q (a)$ of the finite set $\mathcal Q (a)$ has a measurable graph.
\end{proof}

\bigskip\noindent
From Theorem \ref{thm:CE} it follows that $\mathcal C( \mathbb E) = \mathcal W (\mathbb E) \subset \mathcal C^S ( \mathbb E)$. It is left to show that $\mathcal C^S ( \mathbb E) \subset \mathcal C( \mathbb E)$.
\\[2ex]
Now, let $(f,g) \notin \mathcal C( \mathbb E)$ be some non-Core allocation in $\mathbb{E}$. Hence, there exists some $S \in \Sigma$ with $\mu (S)>0$ and a coalitional allocation $(f',g')$ with $f' \colon S \to \mathbb{R}^{\ell}_+$ and $g' \colon S \to \mathbb{R}^{\ell}$ such that
\begin{numm}
\item $u_a(f'(a)) > u_a(f(a)) \equiv \bar{u}_a$ for every $a \in S$;

\item $g'(a) \in \mathcal{P}_a$ for every $a \in S$, and;

\item $\int_S f' \, d\mu = \int_S g' \, d\mu$.
\end{numm}
Next, since $\mathcal Q$ has a measurable graph by Lemma \ref{lem:Q-measurable}, we may define
\begin{equation}
	Q_S = \int_S \mathcal Q (a) \, d\mu (a)
\end{equation}
We note that, by Theorem 4 in \citet[page 64]{Hildenbrand1974}, from the atomlessness of $(A, \Sigma , \mu )$ it follows that
\begin{equation}
	Q_S = \int_S \mathrm{Conv} \, \mathcal Q (a) \, d\mu (a) = \int_S \overline{\mathcal Q} (a) \, d\mu (a) \neq \varnothing
\end{equation}
is a closed and convex set.\footnote{Nonemptiness of $Q_S$ follows from the integrably boundedness of the correspondence $\mathcal P$ and, therefore, of $\mathcal Q$. This implies that all measurable selections in $\mathcal Q$ are integrable.}
\\
We now claim the following property:
\begin{lemma} \label{lem:Q-delta}
	Without loss of generality, we may assume that $\int_S g' \, d\mu \in Q_S$.
\end{lemma}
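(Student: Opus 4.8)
The plan is to show that the blocking arrangement $(f',g')$ furnished by the non-Core hypothesis can always be replaced by another one whose social production plan integrates into $Q_S$; this is exactly the claimed reduction. The key observation is that $g'$ is a selection of $\mathcal P$ and that, by the IRSpec hypothesis, $\mathcal P_a \subset \mathrm{Conv}\, \mathcal Q(a) - \mathbb R^{\ell}_+ = \overline{\mathcal Q}(a) - \mathbb R^{\ell}_+$ for almost every $a \in S$. Hence for almost every $a \in S$ one can write $g'(a) = \bar q(a) - r(a)$ with $\bar q(a) \in \overline{\mathcal Q}(a)$ and $r(a) \geqslant 0$, i.e.\ $g'$ differs from a selection of $\overline{\mathcal Q}$ only by free disposal.

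First I would make this decomposition measurable and integrable. The correspondence $a \mapsto \{ (\bar q, r) \in \overline{\mathcal Q}(a) \times \mathbb R^{\ell}_+ \mid \bar q - r = g'(a) \}$ has a measurable graph, since $\overline{\mathcal Q}$ has a measurable graph by Lemma \ref{lem:Q-measurable} and $g'$ is measurable, and it is non-empty for almost every $a \in S$ by IRSpec; a measurable selection $a \mapsto (\bar q(a), r(a))$ therefore exists by the measurable selection theorem, Proposition 3 in \citet[page 60]{Hildenbrand1974}. Integrability of $\bar q$ and $r$ follows from $0 \leqslant r(a) = \bar q(a) - g'(a)$, from the fact that $\overline{\mathcal Q}$ inherits the integrable upper bound of $\mathcal P$, and from the integrability of $g'$.

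Then $\bar q$ is an integrable selection of $\overline{\mathcal Q}$ on $S$, so $\int_S \bar q \, d\mu \in \int_S \overline{\mathcal Q}(a)\, d\mu(a) = Q_S$, using the Lyapunov identity $Q_S = \int_S \mathcal Q(a)\, d\mu(a) = \int_S \overline{\mathcal Q}(a)\, d\mu(a)$ recorded above (Theorem 4 in \citet[page 64]{Hildenbrand1974}). Since $\int_S \bar q \, d\mu = \int_S g'\, d\mu + w$ with $w = \int_S r\, d\mu \geqslant 0$, and since $Q_S = \int_S \mathcal Q(a)\, d\mu(a)$, the same theorem yields an integrable selection $g'' \colon S \to \mathbb R^{\ell}$ with $g''(a) \in \mathcal Q(a) \subset \mathcal P_a$ for almost every $a \in S$ and $\int_S g''\, d\mu = \int_S \bar q\, d\mu \in Q_S$. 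Setting $f''(a) = f'(a) + \frac{1}{\mu(S)}\, w$ for $a \in S$, I would finally check that $(f'',g'')$ is again a blocking arrangement for $(f,g)$: condition (i) survives because strict monotonicity of $u_a$ together with $f''(a) \geqslant f'(a)$ gives $u_a(f''(a)) \geqslant u_a(f'(a)) > u_a(f(a))$; condition (ii) holds because $g''(a) \in \mathcal P_a$; and $\int_S f''\, d\mu = \int_S f'\, d\mu + w = \int_S g'\, d\mu + w = \int_S g''\, d\mu \in Q_S$, which is both the feasibility of the block and the desired property. Renaming $(f'',g'')$ as $(f',g')$ then establishes the lemma.

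The genuinely convex-analytic input — the Lyapunov identities $Q_S = \int_S \mathcal Q\, d\mu = \int_S \overline{\mathcal Q}\, d\mu$ and the closedness and convexity of $Q_S$ — is already in hand, so the main obstacle is the routine but slightly delicate bookkeeping: producing a measurable and integrable decomposition $g' = \bar q - r$ with $\bar q$ a selection of $\overline{\mathcal Q}$, and verifying that all three blocking conditions persist under the passage from $(f',g')$ to $(f'',g'')$, which is precisely where strict monotonicity of preferences is used.
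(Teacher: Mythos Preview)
Your proof is correct and follows essentially the same route as the paper: both arguments use IRSpec to dominate $g'$ by a measurable selection of $\overline{\mathcal Q}$, then invoke the Lyapunov identity $\int_S \mathcal Q = \int_S \overline{\mathcal Q}$ to land the aggregate in $Q_S$. The only cosmetic difference is that the paper keeps $f'$ unchanged and uses the inequality $\int_S f' \leqslant \int_S g' \leqslant \int_S \hat g$ directly, whereas you add the surplus $w$ to $f'$ to restore equality; your version works but the adjustment (and the appeal to strict monotonicity) is unnecessary, since the blocking definition only requires $\int_S f' \leqslant \int_S g'$.
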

\begin{proof}
Suppose to the contrary that $\int_S g' \, d\mu \notin Q_S$.
\\
Then there exists a subcoalition $T \subset S$ with $\mu (T) >0$ such that $g' (a) \notin \overline{\mathcal Q} (a)$ for every $a \in T$.
\\
Note first that for every $a \in A$ the convex set $\overline{\mathcal Q}(a) \subset \mathbb R^{\ell}$ is an $(\ell -1)$-dimensional subspace of $\mathbb R^{\ell}$. Therefore, since for every $a \in T \colon \mathcal P_a \subset \overline{\mathcal Q} (a) - \mathbb R^{\ell}_+$, $g' (a) \in \mathcal P_a$ and $\overline{\mathcal Q}(a)$ is $(\ell-1)$-dimensional, there exists some vector $z(a) > 0$ such that
\[
g' (a) + z(a) \in \overline{\mathcal Q} (a) .
\]
Without loss of generality, we may assume that $z \colon T \to \mathbb R^{\ell}_+$ is measurable.\footnote{This can be shown through an appropriate application of Proposition 3 in \citet[page 60]{Hildenbrand1974}.} Since the production correspondence $\mathcal P$ is integrably bounded from above by Definition \ref{def:economy}(iii) it follows, therefore, that the mapping $z$ is integrable with $\mathbf{z} = \int_T z(a) \, d\mu (a) >0$. Then 
\[
\int_S g' \, d\mu + \mathbf z = \int_T \left( g'(a) + z(a) \, \right) \, d\mu (a) + \int_{S \setminus T} g' \, d\mu \in \int_S \overline{\mathcal Q} (a) \, d\mu (a) = Q_S .
\]
Then there exists some integrable selection $\hat g \colon S \to \mathbb R^{\ell}$ with $\hat g(a) \in Q(a)$ such that
\[
\int_S \hat g \, d\mu = \int_S g' \, d\mu + \mathbf z \in Q_S .
\]
Now, the coalition $S$ improves upon $(f,g)$ through $(f', \hat g )$. Indeed,
\[
\int_S f' \, d\mu \leqslant \int_S g' \, d\mu <  \int_S g' \, d\mu + \mathbf z = \int_S \hat g \, d\mu .
\]
This shows the assertion of the lemma.
\end{proof}

\bigskip\noindent
From Lemma \ref{lem:Q-delta}, we may assume that $\int_S g' \, d\mu \in Q_S$. Therefore, there exists some integrable selection $g'' \colon S \to \mathbb R^{\ell}$ with $g'' (a) \in \mathcal Q (a)$ such that
\[
\int_S g'' \, d\mu = \int_S g' \, d\mu .
\]
We conclude that $S$ improves upon $(f,g)$ through $(f',g'') \colon u_a (f'(a)) > u_a(f(a))$ for all $a \in S$ and
\[
\int_S f' \, d\mu \leqslant  \int_S g' \, d\mu =  \int_S g'' \, d\mu
\]
Hence, $(f,g) \notin \mathcal C^S ( \mathbb E)$, showing the assertion of Theorem \ref{thm:Blocking}.

\end{document}